\newcolumntype{.}{D{.}{.}{-1}}
\newcolumntype{d}[1]{D{.}{.}{#1}}
\newtheorem{theorem}{Theorem}[section]
\newtheorem{lemma}[theorem]{Lemma}
\newtheorem{algorithm}[theorem]{Algorithm}
\def\ci{\perp\!\!\!\perp}
\newcommand\indep{\protect\mathpalette{\protect\independenT}{\perp}}
\newcommand{\E}{\mathbb{E}}
\newcommand\independent{\protect\mathpalette{\protect\independenT}{\perp}}
\def\independenT#1#2{\mathrel{\rlap{$#1#2$}\mkern2mu{#1#2}}}
\begin{document}

\newcommand{\blind}{0}

\newcommand{\tit}{\bf Doubly robust machine learning for an instrumental variable study of surgical care for cholecystitis}

\if0\blind

{\title{\tit\thanks{ The authors declare no conflicts. Research in this article was supported by the Patient-Centered Outcomes Research Institute (PCORI Awards ME-2021C1-22355) and the National Library of Medicine, \#1R01LM013361-01A1. All statements in this report, including its findings and conclusions, are solely those of the authors and do not necessarily represent the views of PCORI or its Methodology Committee. The dataset used for this study was purchased with a grant from the Society of American Gastrointestinal and Endoscopic Surgeons. Although the AMA Physician Masterfile data is the source of the raw physician data, the tables and tabulations were prepared by the authors and do not reflect the work of the AMA. The Pennsylvania Health Cost Containment Council (PHC4) is an independent state agency responsible for addressing the problems of escalating health costs, ensuring the quality of health care, and increasing access to health care for all citizens. While PHC4 has provided data for this study, PHC4 specifically disclaims responsibility for any analyses, interpretations or conclusions. Some of the data used to produce this publication was purchased from or provided by the New York State Department of Health (NYSDOH) Statewide Planning and Research Cooperative System (SPARCS). However, the conclusions derived, and views expressed herein are those of the author(s) and do not reflect the conclusions or views of NYSDOH. NYSDOH, its employees, officers, and agents make no representation, warranty or guarantee as to the accuracy, completeness, currency, or suitability of the information provided here. This publication was derived, in part, from a limited data set supplied by the Florida Agency for Health Care Administration (AHCA) which specifically disclaims responsibility for any analysis, interpretations, or conclusions that may be created as a result of the limited data set.
}}
\author{Kenta Takatsu\thanks{Carnegie Mellon University, Email: ktakatsu@andrew.cmu.edu}
\and Alexander W. Levis\thanks{Postdoctoral Researcher, Carnegie Mellon University, Email: alevis@cmu.edu}
\and Edward Kennedy\thanks{Associate Professor, Carnegie Mellon University, Email: edward@stat.cmu.edu}
\and Rachel Kelz\thanks{University of Pennsylvania and Leonard David Institute, Email: Rachel.Kelz@pennmedicine.upenn.edu}
\and Luke Keele\thanks{Associate Professor, University of Pennsylvania, Email:
      luke.keele@gmail.com, corresponding author}
}

\date{}

\maketitle
}\fi

\if1\blind
\title{\bf \tit}
\maketitle
\fi

\singlespacing

\begin{abstract}

Comparative effectiveness research frequently employs the instrumental variable design since randomized trials can be infeasible for many reasons. In this study, we investigate and compare treatments for emergency \textit{cholecystitis} --- inflammation of the gallbladder. A standard treatment for cholecystitis is surgical removal of the gallbladder, while alternative non-surgical treatments include managed care and pharmaceutical options. As randomized trials are judged to violate the principle of equipoise, we consider an instrument for operative care: the surgeon's tendency to operate. Standard instrumental variable estimation methods, however, often rely on parametric models that are prone to bias from model misspecification. We outline instrumental variable estimation methods based on the doubly robust machine learning framework. These methods enable us to employ various machine learning techniques for nuisance parameter estimation and deliver consistent estimates and fast rates of convergence for valid inference. We use these methods to estimate the primary target causal estimand in an IV design. Additionally, we expand these methods to develop estimators for heterogeneous causal effects, profiling principal strata, and a sensitivity analyses for a key instrumental variable assumption. We conduct a simulation study to demonstrate scenarios where more flexible estimation methods outperform standard methods. Our findings indicate that operative care is generally more effective for cholecystitis patients, although the benefits of surgery can be less pronounced for key patient subgroups.
\end{abstract}

\noindent%
{\it Keywords: instrumental variables, doubly robust, nonparametric statistics, influence functions} 

\thispagestyle{empty}


\section{Introduction}

\subsection{Emergency Treatment for Cholecystitis}

Emergency general surgery (EGS) refers to medical emergencies where the injury is endogenous (e.g., a burst appendix) while trauma care refers to injuries that are exogenous (e.g., a gunshot wound). Recent research has focused on the comparative effectiveness of operative versus non-operative care for various EGS conditions \citep{hutchings2022effectiveness,kaufman2022operative,moler2022local,grieve2023clinical}. One critical finding from this research is that the effectiveness of operative care tends to vary with EGS condition, highlighting the need for additional research for specific EGS conditions. 

One of the most common EGS conditions is \textit{cholecystitis}, an inflammation of the gallbladder. Cholecystitis is often caused by the formation of \textit{gallstones} --- small stones made from cholesterol, bile pigment and calcium salts --- blocking the tube leading out of the gallbladder, but it may also result from bile duct problems, tumors, serious illness and other types of infections. In severe cases, cholecystitis is considered a medical emergency where the standard treatment is the removal of the gallbladder through surgery, a procedure called \textit{cholecystectomy}. Nevertheless, non-operative alternatives including observation, minimally invasive procedures, medication, and supportive care are also available. Given the potential risks associated with surgical interventions for some patients, it is critical to investigate the relative efficacy of surgical interventions compared to non-operative alternatives. Moreover, given the heterogeneity across EGS conditions, it is also necessary to produce evidence focused on specific conditions such as cholecystitis.

This article aims to assess the efficacy of surgery for cholecystitis. One of the main challenges in addressing the effectiveness of surgery is the lack of randomized controlled trials as they are judged to violate the principal of equipoise in this context. Consequently, researchers must rely on observational studies to collect evidence. While observational studies can provide a large sample size, they are subject to confounding bias by indication. This problem arises when the selection of patients to treatment depends on prognostic factors that indicate potential benefits. It is thus essential to adjust for confounding bias to ensure accurate statistical conclusions. Many statistical procedures for observational studies assume all confounding variables, including prognostic factors, are recorded. However, medical claims data may not always contain complete records of all relevant prognostic factors, for instance, physiological measures of patient frailty \citep{keele2018icubeds}. To address these challenges, the literature on EGS has adopted instrumental variable (IV) methods, which can account for unobserved confounding variables, allowing for a more accurate assessment of the effectiveness of surgery for cholecystitis. In our study, adopting an IV design, we focus on new flexible estimation methods.

The current literature on EGS has primarily focused on estimating average treatment effects, which can mask possible patient-to-patient variation \citep{ding2019decompose}. For instance, the average effect of operative care could obscure the fact that some patients respond dramatically well to treatment while others may suffer adverse reactions. One strategy for understanding heterogenous treatment effects is to focus on conditional average treatment effects (CATEs), which describe how treatment effects vary with measured features. CATEs can help design treatment strategies to target patients who are likely to benefit from existing interventions. Despite recent progress in statistical methodologies for flexible CATE estimation \citep{athey2016recursive, nie2021quasi, semenova2017estimation, wager2018estimation, foster2019orthogonal}, current EGS research often relies on single, binary effect modifiers using extant methods \citep{rosen2022analyzing}. Therefore in our study, we also focus on developing methods that allow for more complex combinations of possible effect modifiers. In the following section, we outline the IV design that we employ to study the comparative effectiveness of treatments for cholecystitis.

\subsection{An IV Design for Cholecystitis}
\label{sec:iv_design_for_Cholecystitis}

IV methods refer to a set of study designs and statistical techniques that can facilitate the identification of causal effects in the presence of unobserved confounders. An IV is a variable that is associated with the treatment of interest but affects outcomes only indirectly through its impact on treatment assignment \citep{Angrist:1996}. In the context of our study, an instrument must be associated with treatment via surgical care, but without directly affecting patient outcomes. To be considered a valid instrument, a variable must meet three conditions: (1) it must be associated with the treatment; (2) it must be randomly or as-if randomly assigned; and (3) it cannot have a direct effect on the outcome \citep{Angrist:1996}. When these conditions are met, along with certain structural assumptions about the data-generating process (which we will elaborate on shortly), an IV can provide a consistent estimate of a causal effect even in the presence of unobserved confounding between the treatment and the outcome. See \citet{Baiocchi:2014} and \citet{imbens2014instrumental} for reviews.

Comparative effectiveness research often uses a physician's preference for a specific course of treatment as an IV \citep{brookhart2006evaluating,keeleegsiv2018}. This is based on the assumption that, for a patient with a given level of disease severity, physicians may have different preferences over a course of treatment for idiosyncratic reasons. In other words, the assignment to each physician serves as a ``nudge" toward each mode of care, assuming that the the physician's preference has no direct effect on patient outcomes. \citet{brookhart2006evaluating} proposed this type of IV in the context of drug prescriptions. In our study, we use a surgeon's tendency to operate (TTO) as an instrument to determine whether a patient receives surgery after admission to the emergency department. \citet{keeleegsiv2018} first proposed using TTO as an IV and measured it by calculating the percentage of times a surgeon operated when presented with an EGS condition. The assignment of surgeons to patients is plausibly as-if randomized, since this study focuses on patients who are receiving emergency care and are unlikely to be able to select their physician \citep{keeleegsiv2018}.

The dataset used in this study merges the American Medical Association (AMA) Physician Masterfile with all-payer hospital discharge claims from New York, Florida and Pennsylvania in 2012-2013. The study population includes all patients admitted for emergency or urgent inpatient care. The data includes patient sociodemographic and clinical characteristics, such as indicators for frailty, severe sepsis or septic shock, and 31 comorbidities based on Elixhauser indices \citep{elixhauser1998comorbidity}. Each patient is linked with his or her surgeon through a unique identifier. The primary outcome is the presence of an adverse event after surgery, i.e., death or a prolonged length of stay in the hospital (PLOS). We excluded surgeons from our study who did not perform at least five operations for one of the 51 specific EGS conditions per year within the two-year study timeframe. To measure the instrument (TTO), we follow the strategy proposed by \citet{keeleegsiv2018}. For each surgeon, we randomly split the patient population into five subsets. Using one of the sub-splits, we calculate the proportion of times a surgeon operated for the population of patients for a set of EGS conditions. This measure is used as the instrument for the other 80\% of the patient population. 

\subsection{Our Contributions}
\label{sec:intro_DRML}

The majority of IV analyses use a method known as two-stage least squares (TSLS) for estimation \citep{angrist1995two}. TSLS is based on a set of two linear models that may be prone to bias from model misspecification. Notably, the validity of TSLS relies on correct specification of two parametric linear models related to the conditional mean of the outcome. Consequently, substantial bias can occur if there are nonlinear effects on the outcome not accounted for in the statistical model or interactions between control variables that are omitted. 

To reduce bias from model misspecification, nonparametric methods --- often referred to as machine learning (ML) --- can be used to flexibly estimate parts of the data-generating distribution. However, it is generally infeasible to directly apply ML methods to causal inference, since the resulting estimators may be biased and yield confidence intervals with poor coverage \citep{chernozhukov2018generic}. Specifically, treatment effect estimates based on ML methods may inherit first-order smoothing bias from the ML fits \citep{kennedy2022semiparametric}. In addition, ML methods often exhibit slow convergence rates, which results in inefficiency and invalid statistical inference \citep{van2003unified, robins2008higher,kennedy2016semiparametric, chernozhukov2018double}. Recent research has demonstrated that nonparametric methods can be applied to causal inference with the help of semiparametric theory \citep{bickel1993efficient,van2003unified,tsiatis2006semiparametric, kennedy2016semiparametric}. This set of tools is often referred to as the doubly robust machine learning (DRML) framework \citep{chernozhukov2018double,hernan2020causal, kennedy2022semiparametric}. 

Following the framework in \citet{chernozhukov2018double}, we outline DRML methods for estimation of the \textit{local} average treatment effect from IV designs in which the so-called monotonicity assumption --- a commonly invoked assumption for the identification of IV effects (see Section~\ref{section:causal_estimand} for details) --- holds. We also summarize the use of sample splitting to ensure parametric convergence rates and valid inference for the target parameter. This provides investigators with an analytic pipeline for using ML methods for IV estimation to alleviate bias from model misspecification. In addition, we develop new, flexible methods for the estimation of related conditional effects within IV designs. More specifically, we extend a meta-learner framework \citep{kennedy2020optimal} to conditional local average treatment effect estimation.  

Additionally, we introduce new DRML methods related to the monotonicity assumption. First, we develop a method for profiling the principal strata that are implied by the monotonicity assumption. Next, we outline a new method of sensitivity analysis to probe the plausibility of the monotonicity assumption. Importantly, our sensitivity analysis framework allows investigators to judge whether study conclusions are robust to possible violations of the monotonicity assumption.

\subsection{Related Work}
There has been a great deal of interest in applying doubly robust or machine learning-based methods in IV designs, but little that has incorporated both sets of ideas. Here, we briefly review related work to better place our contributions in the context of the literature. First, some existing work has focused on applying semiparametric theory to develop robust estimators of causal effects in IV settings. \citet{ogburn2015doubly} proposed doubly robust estimators of marginal and conditional local average treatment effects under an IV design with monotonicity. Their approach presupposes a parametric model for the conditional effects, and in addition, they focus on parametric models for instrument probabilities and other nuisance functions. By contrast, we pursue a nonparametric approach where all necessary components of the observed data distribution can be flexibly estimated, e.g., with machine learning algorithms. \citet{abadie2003semiparametric} proposed semiparametric estimators of parameters corresponding to least squares projections of local conditional effects. \citet{tan2010marginal} proposed doubly robust estimators of parameters of marginal structural models and structural nested mean models, employing parametric models for the component nuisance functions. Finally, \citet{liu2020identification} derived necessary and sufficient conditions for identification of the average treatment effect on the treated in the presence of an IV, and proposed doubly robust estimators of this target estimand, again under parametric models for the nuisance functions involved. 

Second, there have been many applications of flexible models or machine learning algorithms for causal effect estimation in IV designs, without formal nonparametric efficiency guarantees for estimation of target parameters. In particular, investigators have employed regression trees and bayesian causal forests \citep{bargagli2019heterogeneous, bargagli2020causal, spanbauer2022flexible}, kernel-based approaches \citep{singh2019kernel, muandet2020dual}, and deep neural networks \citep{hartford2017deep, lewis2018adversarial, bennett2019deep} to estimate causal quantities in IV settings. See \citet{wu2022instrumental} for a review of this area of research.

Lastly, there has been some work to combine ideas from semiparametric theory and machine learning to develop efficient and robust estimators of conditional and marginal causal effects on the basis of instrumental variables. This work closely aligns with \citet{chernozhukov2018double}, particularly concerning the estimation and inference of marginal effects in IV settings. Their work outlines the DRML procedure, which directly informs our methodology. \citet{mauro2020instrumental} employed DRML methods for estimation of certain dynamic treatment effect estimands in an IV design. \citet{kennedy2019robust} considered a generalization of monotonicity in continuous IV settings, and proposed DRML-based estimates of the local instrumental variable curve. \citet{diazordaz2020data} developed DRML-based estimators of projection-based conditional local effects. That is, asserting a ``working'' parametric model for these conditional effects (e.g., thinking of these as closest parametric approximations to true conditional effects), they developed a targeted maximum likelihood estimator (TMLE) of the finite set of parameters in the working model. \citet{lee2020doubly} proposed a DRML-based estimator of the local average treatment effect for a time-to-event outcome under right censoring. Finally, similar to one of our proposals, \citet{frauen2022estimating} developed a doubly robust meta-learner of local effects conditional on all covariates. 

As such, the existing research on DRML methods for the IV framework is generally focused on more specific applications, and hasn't yet outlined a complete set of DRML methods for more standard IV applications. In addition, we propose a doubly robust meta-learner that differs in that it involves two pseudo-outcomes and second-stage regressions, and we also allow for more coarsened analysis by developing estimators of effects conditional on subsets of covariates. The rest of the article is organized as follows. In Section~\ref{sec:iv}, we review notation, assumptions, and the target marginal and conditional causal estimands in our study using IV methods. In Section~\ref{sec:drml}, we develop DRML estimation methods for both marginal and conditional treatment effects. In Section~\ref{sec:mono}, we review our proposed methods for profiling and sensitivity analysis. Next, we perform a simulation study to demonstrate how our proposed methods improve over extant methods in Section~\ref{sec:sim}. Finally, we present empirical results in Section~\ref{sec:results} and concluding remarks in Section~\ref{sec:conc}.


\section{The IV Design}
\label{sec:iv}

\subsection{Notation and Setup}
In this section, we introduce the notation to describe our statistical setting. We define each observational data unit as $O:=(Y, A, Z, X)$ where $Y$ represents the outcome of interest, $A$ is a binary treatment, $Z$ is a binary instrument and $X = (X_1,\ldots,X_k)$ is a set of baseline covariates. Specifically, in our study, $Y$ is a binary indicator for an adverse event. Treatment $A=1$ indicates that a patient received operative care after their emergency admission, while $A = 0$ indicates that a patient received alternative non-operative care. The instrument $Z$ is an indicator that takes the value of $1$ when the continuous measure of a surgeon's tendency to operate (TTO) is above the sample median, and 0 otherwise. That is, an IV value of 1 indicates that a patient has been assigned to a high TTO surgeon and thus are more likely to receive operative care. We observe an independent and identically distributed sample $(O_1, ..., O_n)$ from an unknown data-generating distribution $P_0$, and we denote by $\mathbb{P}_n$ its corresponding empirical distribution. We index objects by $P$ when they depend on a generic distribution, and we use subscript $0$ as short-hand for $P_0$. For instance, we denote the expectation under $P_0$ by $\E_0$. For a distribution $P$ and any $P$-integrable function $\eta$, we define $P\eta = \int \eta \, dP$. For instance, $\mathbb{P}_n \eta$ represents $n^{-1}\sum_{i=1}^n \eta(O_i)$. Occasionally, we slightly abuse the notation and treat the random variable as an identity mapping to itself, such that, $\mathbb{P}_n Y = n^{-1}\sum_{i=1}^n Y_i$.

We use the Neyman-Rubin potential outcome framework to formalize assumptions for the identification of the target causal parameters \citep{neyman1923application, rubin1974estimating}. We use $Y(a)$ to denote the potential outcome that \emph{would have been observed} had treatment been set to value $A=a$. When the treatment-outcome relationship is confounded (e.g., when $A$ is not randomized), estimating the effect of $A$ on $Y$ is challenging, relying on measurement of a sufficient set of confounders. However, if a valid instrumemt $Z$ is available, a causal effect of $A$ on $Y$ can be estimated consistently even in the presence of unobserved confounders. With an instrument, we require two additional potential outcomes: $Y(z,a)$ the potential outcome that would have been observed had the IV been set to $Z=z$ and treatment been set to $A=a$, and $A(z)$ the potential treatment status if the IV had been set to $Z=z$.

\subsection{Core IV Assumptions}
Next, we formalize the three core IV assumptions using the potential outcome framework, and discuss their plausibility in the context of our application. First, our notation implicitly assumes the stable unit treatment value assumption (SUTVA) \citep{Rubin:1986}: both the potential treatments received $A(z)$ and potential outcomes $Y(z,a)$, for $z, a=0,1$, depend solely on the value $z$ of the instrument, and $a$ of the treatment for $Y(z,a)$, for each individual. That is, there is only one version of the instrument and treatment, and $A_i(z)$ and $Y_i(z,a)$ are not affected by the value of $(Z_{i'}, A_{i'})$ for $i'\neq i$. These two components of SUTVA are often referred to as the consistency and no-interference assumptions, respectively. 

As discussed earlier, a valid instrument $Z$ must satisfy the following three conditions: (1) $Z$ affects or is associated with $A$, (2) is as good as randomized, possibly after conditioning on measured covariates $X$, and (3) affects the outcome $Y$ only indirectly through $A$ \citep{Angrist:1996}. The three core IV assumptions can be stated as follows:
\begin{enumerate}[label=\textbf{(A\arabic*)},leftmargin=2cm]
\item \textbf{Relevance:} \label{as:relevance}$Z$ is associated with exposure $A$, or equivalently, $P_0\big(A(1) = A(0)\big) \neq 1$.
\item \label{as:unconfounded}\textbf{Effective random assignment:} For all $z,a \in \{0,1\}$, $Z \indep \big(A(z), Y(z)\big) \mid X$. This is sometimes called the ``unconfoundedness'' assumption. 
\item \label{as:excl_restriction}\textbf{Exclusion restriction:} $Y(z,a) = Y(z',a)$ for all $z,z',a \in \{0,1\}$. This implies that $Z$ only affects $Y$ through its influence on $A$ i.e., $Y(z, a) \equiv Y(a)$, for all $z,a \in \{0,1\}$. 
\end{enumerate}

In our study, assumption \ref{as:relevance} implies that TTO affects the likelihood of a patient receiving surgery, which has been empirically validated in previous work \citep{keeleegsiv2018}. Although neither \ref{as:unconfounded} nor \ref{as:excl_restriction} can be tested, assumption \ref{as:unconfounded} is plausible as our data is based on emergency admissions, and patients may have little choice in selecting surgeons with a high or low TTO. We also adjust for baseline covariates in our analysis to further bolster this assumption. Statistical adjustment for baseline covariates is key motivation for developing more flexible methods of estimation. Assumption \ref{as:excl_restriction} implies that any effect of a surgeon's TTO on the outcome must only be a consequence of the medical effects of surgery. This can be violated in our study if treatment from a surgeon with a high TTO includes other aspects of medical care that can affect the outcome. This type of violation is relatively unlikely since nursing care in the emergency department is the same regardless of surgeon. To bolster the plausibility of this assumption, we compare patients within the same hospital that were assigned to surgeons with different levels of TTO, which holds other systemic factors of care constant \citep{keeleegsiv2018}.  \citet{keeleexrest2018} conduct a falsification test for the exclusion restriction based on TTO, and find no evidence against this assumption.
 
\subsection{Monotonicity and Local Average Treatment Effect}\label{section:causal_estimand}

One common statistical estimand of interest in an IV analysis is the following quantity:
\begin{equation}
\label{eq:LATE_identification}
  \chi_0 := \frac{\mathbb{E}_0\big[\mathbb{E}_0\big(Y \mid Z=1, X\big)-\mathbb{E}_0\big(Y \mid Z=0, X\big)\big]}{\mathbb{E}_0\big[\mathbb{E}_0\big(A \mid Z=1, X\big)-\mathbb{E}_0\big(A \mid Z=0, X\big)\big]}.
\end{equation}
The seminal work by \citet{Angrist:1994} shows that this estimand identifies a specific \textit{subgroup} causal effect. For this identification result to hold, we must introduce an additional assumption called \textit{monotonicity}. Typically, the monotonicity assumption is interpreted by stratifying the study observations into four so-called principal strata~\citep{Angrist:1996,Frangakis:2002}. That is, when both the instrument and treatment are binary, we can stratify observations into four groups or principal strata: \textit{compliers} (i.e., $A(1) > A(0)$), \textit{always-takers} (i.e., $A(1) = A(0) = 1$), \textit{never-takers} (i.e., $A(1) = A(0)=0$) and \textit{defiers} (i.e., $A(1) < A(0)$) \citep{Angrist:1996,Frangakis:2002}. In our application, always-takers will receive surgery regardless of the surgeons' TTO, while never-takers never receive surgery regardless of a surgeons' TTO. Compliers, however, receive surgery because they are assigned to a high TTO surgeon. Defiers always act contrary to the assignment of the IV. Monotonicity is then defined as follows:
\begin{enumerate}[label=\textbf{(A\arabic*)},leftmargin=2cm]
\setcounter{enumi}{3}
\item \label{as:monotonicity}\textbf{Monotonicity:} $P_0\big(A(1) < A(0)\big) = 0$.
\end{enumerate}
In other words, the monotonicity rules out the existence of defiers under the data-generating distribution $P_0$. The fundamental result from \citet{Angrist:1994} states that, under suitable assumptions, the IV estimand given by \eqref{eq:LATE_identification} equals a causal effect among compliers, or \textit{local average treatment effect} (LATE).
\begin{lemma}[Identification of LATE]\label{lm:id_late}
Assuming SUTVA, \ref{as:relevance}--\ref{as:monotonicity}, and $0 < P_0(Z = 1 \mid X) < 1$ almost surely, then 
\[\chi_0 = \E_0\big[Y(1) - Y(0) \mid A(1) > A(0) \big].\]
\end{lemma}

In our application, the LATE represents the effect of surgery among those patients who receive surgery because they receive care from a high TTO surgeon. It is critical to note that the LATE may not coincide with a more common target causal estimand, such as the average treatment effect (ATE). The ATE, represented by $\mathbb{E}_0\big[Y(1) - Y(0)\big]$, measures the average difference in outcomes when all individuals in the study population are assigned to the surgery group versus when all individuals are assigned to the non-operative management group. While the ATE provides a summary of treatment effects over the entire population, the LATE only represents the effect among compliers.

The monotonicity assumption may be controversial in the context of preference-based instruments like TTO \citep{swanson2014think,swanson2017challenging}. Monotonicity violations may arise when the instrument is not delivered uniformly to all subjects. In our context, the monotonicity assumption would not hold if there patients that are treated contrary to a physician's preferences for surgery. Indeed, a lack of such contrary cases may seem unlikely since a surgeon's preferences weigh a variety of risks and benefits. As an example, assume we have two surgeons who work in the same hospital. Surgeon 1 generally prefers to operate but makes exceptions for cholecystitis patients (because of some known contraindications). Surgeon 2 generally tries to avoid surgical treatments but makes exceptions for patients who are very healthy and can withstand the invasive nature of many procedures. Thus any patient that has cholecystitis but is also very healthy would be treated in a way contrary to each surgeons' general preferences for operative care and is a defier. Under this scenario, monotonicity does not hold. Given this concern, we propose and develop a sensitivity analysis to probe the plausibility of the monotonicity assumption.

\subsection{The Conditional Local Average Treatment Effect}

In the literature using IV methods for EGS conditions, the primary focus has been on the LATE. While the LATE is a useful summary of the overall effect among compliers, it tells us little about possible variation of the treatment effect across patients. To explore the possibility of heterogeneous treatment effects, we can estimate covariate-specific LATEs or conditional LATEs (CLATEs).  In our application, the CLATE corresponds to the causal effect of surgery on patient outcomes for subgroups of complying patients defined by their baseline covariates, such as age and comorbidities. By estimating CLATEs, we can investigate how the causal effect of surgery varies for different subgroups of patients, which can inform clinical decision-making and help identify which patients are either most likely to benefit from the surgical intervention or be at risk of an adverse event. This information is especially important in cases where the overall average effect of the treatment may be misleading due to heterogeneity in treatment effects across different patient subgroups.

To identify the CLATE, we invoke the identification conditions outlined in \citet{abadie2003semiparametric} and \citet{ogburn2015doubly}. First, we define the covariate set $V$, which is a subset of $X$. \citet{abadie2003semiparametric} first studied the special case $V = X$. The covariates in $V$ are the set of covariates identified as effect modifiers. We then define the conditional relevance assumption, which is a conditional analog to \ref{as:relevance}:
\begin{enumerate}[label=\textbf{(A'\arabic*)},leftmargin=2cm]
\item \textbf{Conditional relevance:} \label{as:cond_relevance}$Z$ is associated with exposure $A$ conditioning on $V=v_0$, or equivalently, $P_0\big(A(1) = A(0) \mid V=v_0\big)\neq 1$.
\end{enumerate}
Replacing \ref{as:relevance} with \ref{as:cond_relevance} leads to the 
following identification result, as shown in \citet{ogburn2015doubly}:
\begin{lemma}[Identification of Conditional LATE]
Assuming SUTVA, \ref{as:cond_relevance}, \ref{as:unconfounded}, \ref{as:excl_restriction}, \ref{as:monotonicity}, and $0 < P_0(Z = 1 \mid X) < 1$ almost surely, then 
\begin{align}
    \chi_0(v_0) &:= \frac{\mathbb{E}_0\big[\mathbb{E}_0\big(Y \mid Z=1, X\big)-\mathbb{E}_0\big(Y \mid Z=0, X\big)\mid V=v_0\big]}{\mathbb{E}_0\big[\mathbb{E}_0\big(A \mid Z=1, X\big)-\mathbb{E}_0\big(A \mid Z=0, X\big)\mid V=v_0\big]}\nonumber \\
    &= \E_0\big[Y(1) - Y(0) \mid A(1) > A(0), V=v_0\big].\nonumber
\end{align}

\end{lemma}

The statistical functional of interest, $\chi_0(v_0)$, is now the ratio of two regression functions. While the standard LATE is a scalar quantity, the CLATE is a function of $V$, making estimation and inference significantly more complex. Below, we outline how to use a meta-learner framework for flexible estimation of the CLATE. Recent research has developed meta-learner methods for conditional treatment effects that incorporate semiparametric theory \citep{luedtke2016super,wager2018estimation,foster2019orthogonal,kennedy2020optimal,chernozhukov2018double,nie2021quasi}. In Section~\ref{sec:DR_learner}, we adapt the ``DR-Learner'' from \citet{kennedy2020optimal} to the context of CLATE estimation.

\section{DRML Estimation and Inference for IV Designs}
\label{sec:drml}

The DRML framework combines semiparametric theory, machine learning methods, and sample splitting to derive flexible yet efficient nonparametric estimators of causal parameters \citep{bickel1993efficient,van2003unified,tsiatis2006semiparametric, kennedy2016semiparametric}. A crucial feature of the DRML framework is the use of influence functions --- a central object of semiparametric theory --- to construct root-$n$ consistent estimators under relatively weak conditions. In particular, the framework provides general heuristics to derive estimators that converge in distribution to a mean-zero normal distribution with the smallest asymptotic variance among a large class of estimators. 

First, we briefly review how to estimate the average treatment effect (ATE) using the DRML framework. See \citet{kennedy2022semiparametric} for a more detailed treatment. Under causal positivity, consistency, and no unmeasured confounding, the ATE can be written as the contrast of two averaged regression functions: $\Gamma_0 := \E_0[\E_0(Y \mid A = 1, X)]-\E_0[\E_0(Y \mid A = 0, X)]$, where $Y$ is the outcome, $A$ is binary treatment, and $X$ are measured confounders. Estimating $\E_0(Y \mid A = a, X)$ for $a \in \{0,1\}$ is a standard regression problem and flexible ML methods may be preferred to avoid model misspecification. However, if $\Gamma_0$ is estimated as the average of predictions from an ML model, the estimator typically inherits first order bias and converges slower than the so-called parametric rate of $n^{-1/2}$. Moreover, depending on the ML method, statistical inferences may not be valid. 

Using semiparametric theory, it is possible to construct an estimator $\widehat\Gamma$ of $\Gamma_0$ on the basis of some mean-zero function $\dot{\Gamma}_0^*$, such that 
\[
n^{1/2}\left(\widehat\Gamma - \Gamma_0\right) \overset{d}{\longrightarrow} N\left(0, P_0\dot{\Gamma}_0^{*2}\right)
\]
\noindent
where $P_0\dot{\Gamma}_0^{*2}$ is the asymptotic variance of $\widehat{\Gamma}$, scaled by sample size $n$. The \textit{efficient influence function}, which is the optimal $\dot{\Gamma}_0^{*}$ in terms of asymptotic efficiency in a nonparametric model, is based on two regression models: the outcome model $\E_0(Y \mid A, X)$ and the propensity model $P_0(A = 1 \mid X)$. The outcome model can be estimated by regressing the outcome against the treatment and all confounders; while the propensity model can be estimated by regressing the treatment against all confounders. The final estimate of the ATE is obtained by combining estimates from these two models. This approach is often called ``doubly robust'' as it is consistent if either the outcome model or the propensity model is correctly specified \citep{Scharfstein:1999a, robins2001inference}, and it typically exhibits a faster rate of convergence due to second-order product bias compared to using a single model. 

To avoid complicated statistical dependencies from using the same data for estimation and selection of tuning parameters, sample-splitting or cross-fitting is then applied in the estimation process \citep{robins2008higher, zheng2010asymptotic, chernozhukov2018double}. Overall, the DRML framework enables the estimation of treatment effects using ML methods while potentially retaining fast parametric-type $n^{-1/2}$ inferential rates.

\subsection{DRML for LATE}\label{sec:nonparametric-late}

In this section, we outline the DRML framework for the estimation of LATEs under weak distributional assumptions. To begin, we define two functions:
\begin{equation}\label{eq:gamma_def}
    \gamma_P(x) := \mathbb{E}_P(Y \mid X=x, Z = 1) - \mathbb{E}_P(Y \mid X=x, Z = 0)
\end{equation}
\noindent and 
\begin{equation}\label{eq:delta_def}
    \delta_P(x) := \mathbb{E}_P(A \mid X=x, Z = 1) - \mathbb{E}_P(A \mid X=x, Z = 0).
\end{equation}
The first term is the effect of the IV on $Y$ conditional on covariates, and the second is the effect of the IV on $A$ conditional on the covariates. Lemma~\ref{lm:id_late} established that the LATE is identified via the parameter $\chi_0 := \Gamma_0/\Delta_0$, where $\Gamma_0 := P_0\gamma_0$ and $\Delta_0 :=P_0\delta_0$. Note that the relevance, effective random assignment, and monotonicity assumptions ensure that $\Delta_0$ is non-zero. Based on this identification result, we motivate a plug-in estimator $\widehat{\chi} := \widehat{\Gamma}/\widehat{\Delta}$ that is the ratio of two estimators. 

The efficient influence functions of $\Gamma_P$ and $\Delta_P$ at $P$ in a nonparametric model are equivalent to those for the ATE
and are \citep{hahn1998role}:
  \[\dot{\Gamma}^*_P := (y,z,x) \mapsto \frac{2z - 1}
    {\pi_{P}(x,z)}\{y - \mu_{P}(x,z)\} +
    \gamma_P(x) - \Gamma_P\]
    \noindent and 
    \[ \dot{\Delta}^*_P :=(a,z,x) \mapsto
    \frac{2z - 1}{\pi_{P}(x,z)}\{a -
    \lambda_{P}(x,z)\} + \delta_P(x) - \Delta_P\]
These two influence functions involve three nuisance functions: $\mu_{P}(x,z) := \mathbb{E}_P(Y \mid X=x, Z = z)$, $\lambda_{P}(x,z) := \mathbb{E}_P(A \mid X=x, Z = z)$, and $\pi_{P}(x,z) := P(Z = z \mid X=x)$. The first is an outcome model, the second is a model for treatment, and the third is a model for the IV. We discuss estimation of these nuisance functions below. Since influence functions are mean-zero by definition, we often consider ``uncentered" influence functions whose mean corresponds to the target estimand. To distinguish uncentered influence functions from the original influence functions, we use the notation $\dot{\Gamma}_P$ without the asterisk. The uncentered influence functions of $\Gamma_P$ and $\Delta_P$ are given by 
\begin{equation}\label{eq:gamma_uncentered}
    \dot{\Gamma}_P := (y,z,x) \mapsto \frac{2z - 1}
    {\pi_{P}(x,z)}\{y - \mu_{P}(x,z)\} +
    \gamma_P(x)
\end{equation}
\noindent
and
\begin{equation}\label{eq:delta_uncentered}
    \dot{\Delta}_P := (a,z,x) \mapsto \frac{2z - 1}
    {\pi_{P}(x,z)}\{a - \lambda_{P}(x,z)\} +
    \delta_P(x).
\end{equation}
The function $\dot{\Gamma}_0$ has the property that $P_0\dot{\Gamma}_0 = \Gamma_0$ and similarly, $P_0\dot{\Delta}_0 = \Delta_0$ for $\dot{\Delta}_0$.

The estimators $\widehat\Gamma$ and $\widehat\Delta$ we propose are ``one-step'' estimators, corresponding to the sample mean of the estimated uncentered influence functions. To construct the estimator, we first need to estimate the unknown nuisance functions: $\mu_{P}$, $\lambda_{P}$ and $\pi_P$. In the DRML framework, the analyst selects the estimation method for these nuisance functions. To avoid model misspecification, one could select flexible ML methods such as random forests or generalized boosting for this task. However, the estimation process is completely agnostic to the choice of nuisance estimation method. For example, one could estimate $\mu_P$, $\lambda_P$, or $\pi_P$ with least squares instead. In this case, the estimator would be doubly-robust but would also be fully parametric. As such, reducing bias in the DRML framework depends on selecting suitably flexible estimation methods for the nuisance functions.

 Next, we compute the ratio of the plug-in estimators of uncentered influence functions as follows:
\begin{equation}\label{eq:chi_hat_expression}
\widehat{\chi} := \widehat\Gamma/\widehat\Delta = \frac{\mathbb{P}_n\dot{\Gamma}_{\widehat{P}}}{\mathbb{P}_n\dot{\Delta}_{\widehat{P}}} = \frac{\mathbb{P}_n\left[\frac{2Z - 1}
      {\widehat{\pi}}\{Y -
      \widehat{\mu}\} +
      \widehat{\gamma}\right]}{\mathbb{P}_n\left[\frac{2Z - 1}{\widehat{\pi}}\{A -
      \widehat{\lambda}\} +
      \widehat{\delta}\right]},
\end{equation}
omitting inputs to nuisance estimates.
We introduced the subscript $\widehat{P}$ above to denote that all unknown quantities are replaced by their respective estimators. 

In the DRML framework, sample splitting and cross-fitting is typically part of the estimation process (e.g., \citet{chernozhukov2018double}). As such, we also pursue sample splitting, which will slightly modify the procedure described above. One straightforward sample-splitting scheme is to randomly split the study population into two samples $D_0$ and $D_1$. First, using $D_0$, we estimate the nuisance terms $\mu_0$, $\lambda_0$, and $\pi_0$ and any tuning parameters needed for the ML methods. Then, we compute an estimate of $\chi_0$ as in~\eqref{eq:chi_hat_expression} by plugging in data from $D_1$ into the previously fitted nuisance functions. Moreover, to increase efficiency, the role of $D_0$ and $D_1$ can be swapped and the final estimate can be taken as the average of the two estimates. In fact, more than two sample splits can be utilized for extra stability. This procedure is known as cross-fitting. In addition to using sample splitting for the estimation of the nuisance terms, we can use an ensemble of ML methods. That is, we can estimate each nuisance term with multiple or different ML methods and combine the results to further reduce the possibility of model misspecification. For instance, we can estimate $\mu_0$ using both random forests and boosting and then combine the two estimates by a meta-learner such as Super Learner \citep{van2007super}. 

One of the main benefits of semiparametric theory is that it facilitates the construction of $(1-\alpha)$-level confidence intervals for nonparametric estimators. The limiting variance of the resulting estimator coincides with the variance of the efficient influence function, divided by $n$. We first define the efficient influence function of $\chi_0$ in the following lemma:
\begin{lemma}\label{lemma:IF} 
Let $\chi_P = \Gamma_P/\Delta_P$ be the ratio of the two parameters $\Gamma_P$ and $\Delta_P$, whose influence functions are $\dot{\Gamma}^*_P$ and $\dot{\Delta}^*_P$, respectively. The influence function of $\chi_P$ at $P$ is given by
  \begin{equation}\label{eq:chi_influence_function}
      \dot{\chi}^*_P := (y,a,z,x) \mapsto \frac{1}{\Delta_P}\left(\frac{2z - 1}
      {\pi_{P}(x,z)} \left[y - \mu_{P}(x,z) -
        \chi_P\{a - \lambda_{P}(x,z)\}\right] +
      \gamma_P(x) - \chi_P\delta_P(x)\right).\nonumber
  \end{equation}
 In a nonparametric model, the influence function above is also the efficient influence function.
\end{lemma}
The proof of this Lemma is provided by \citet{kennedy2021semiparametric} as Example 6. The next result characterizes the asymptotic distribution of the estimator $\widehat{\chi}$ and provides a basis for statistical inference. We now introduce the following additional conditions: 
\begin{enumerate}[label=\textbf{(A\arabic*)},leftmargin=2cm]
\setcounter{enumi}{4}
\item $\|\widehat{\pi}-\pi_0\| \max(\|\widehat{\lambda}-\lambda_0\|, \|\widehat{\mu}-\mu_0\|) = o_P(n^{-1/2})$
\label{as:doubl-robust} 
\item $\|\widehat{\pi}-\pi_0\| =o_P(1)$, $\|\widehat{\mu}-\mu_0\| =o_P(1)$ and $\|\widehat{\lambda}-\lambda_0\| =o_P(1)$ 
\label{as:empirical-process} 
\item $|Y|$ is bounded almost surely
\label{as:bounded-y} 
\end{enumerate}
where $\|f\| := \left(P_0f^2\right)^{1/2}$ for any $P_0$-square-integrable function of $O$. Condition~\ref{as:doubl-robust} requires the nuisance estimators to converge at certain rate. and will be satisfied when all estimators converge faster than $n^{-1/4}$, but it also extends to more general settings. We are now ready to state the weak convergence of the estimator $\widehat\chi$.
\begin{lemma}
Assuming~\ref{as:doubl-robust}, \ref{as:empirical-process}, \ref{as:bounded-y}, $\Delta_0 > 0$, and there exist $\varepsilon_1, \varepsilon_2> 0$ such that $\varepsilon_1 < P_0(Z = 1 \mid X) < 1-\varepsilon_1$ and $\varepsilon_2 < \widehat\pi(X, 1), \pi(X, 1)< 1-\varepsilon_2$ almost surely, then 
    \begin{equation}
        n^{1/2}(\widehat\chi - \chi_0)\overset{d}{\longrightarrow} N\left(0, \,P_0\dot{\chi}^{*2}_0\right).\nonumber
    \end{equation}
\end{lemma}
We note that \ref{as:bounded-y} can be relaxed to a moment condition. The proof of this Lemma is provided in the Supplementary Material. With weak convergence of the estimator $\widehat\chi$ established, we can motivate the following $(1-\alpha)$-level Wald-style confidence interval based on its asymptotic normality:
\begin{equation}\label{eq:confidence_interval}
    \widehat\chi \pm n^{-1/2}q_{1-\alpha/2} \left(\mathbb{P}_n\dot{\chi}_{\widehat P}^{*2}\right)^{1/2}
\end{equation}
where $q_p$ is the $p$th quantile of the standard Normal distribution and $\dot{\chi}^*_{\widehat P}$ is an estimator of the influence function whose expression is given by Lemma~\ref{lemma:IF}. 

We summarize the DRML procedure outlined in this section in the following algorithm:
\begin{algorithm}[DRML Estimation Algorithm for LATE]
\label{algo:late}
Let $(D_0, D_1)$ denote two independent samples of $O := (Y, A, Z, X)$ with sample sizes $n_0$ and $n_1$. The estimator $\widehat{\chi}$ of $\chi_0$ and the confidence intervals based on the DRML framework are given as follows:
\begin{enumerate}
    \item Construct estimators $\widehat{\pi}$, $\widehat{\mu}$, and $\widehat{\lambda}$ using $n_0$ samples from $D_0$.
    \item For each $O_1, O_2, \dots O_{n_1} \in D_1$, construct the plug-in estimates of (uncentered) influence functions $\big\{\dot{\Gamma}_{\widehat{P}}(Y_i, Z_i, X_i)\big\}_{i=1}^{n_1}$ and $\big\{\dot{\Delta}_{\widehat{P}}(A_i, Z_i, X_i)\big\}_{i=1}^{n_1}$ given by \eqref{eq:gamma_uncentered} and \eqref{eq:delta_uncentered} using the estimators from step 1. 
    \item Compute ${n_1}^{-1}\sum_{i=1}^{n_1}\dot{\Gamma}_{\widehat{P}}(Y_i,Z_i,X_i)$ and ${n_1}^{-1}\sum_{i=1}^{n_1}\dot{\Delta}_{\widehat{P}}(A_i,Z_i,X_i)$ using $D_1$. Report the final estimator $\widehat{\chi}$ given by \eqref{eq:chi_hat_expression}.
    \item Construct the $(1-\alpha)$-level confidence intervals using \eqref{eq:confidence_interval}.
\end{enumerate}
\end{algorithm}

\subsection{DR-Learner for Conditional LATE}
\label{sec:DR_learner}

Next, we develop a DRML estimator for the conditional LATE. Unlike the estimator we developed for the LATE, it is not straightforward to extend the DRML framework to estimation of the CLATE. This is primarily because the CLATE is an infinite-dimensional object and does not possess the necessary pathwise differentiability property required for the existence of an influence function \citep{kennedy2022semiparametric}. To address this challenge, we adapt the DR-Learner from \citet{kennedy2020optimal} to our IV setting.  We now outline a meta-learner algorithm for the estimation of CLATEs:

\begin{algorithm}[DR-Learner for Conditional LATE]
\label{algo:conditional_l}
Let $(D_0, D_1)$ denote two independent samples of observations $O := (Y, Z, A, X)$ with sample sizes $n_0$ and $n_1$. Let $V \subseteq X$ be a covariate set. The estimator $\widehat\chi(v_0)$ of $\chi_0(v_0)$ for any $V=v_0$ is given as follows:
\begin{enumerate}
    \item Construct estimators $\widehat{\pi}$, $\widehat{\mu}$, and $\widehat{\lambda}$ using $n_0$ samples from $D_0$.
    \item For $O_1, O_2, \dots O_{n_1} \in D_1$, construct a plug-in estimate of (uncentered) influence functions $\big\{\dot{\Gamma}_{\widehat{P}}(Y_i, Z_i, X_i)\big\}_{i=1}^{n_1}$ and $\big\{\dot{\Delta}_{\widehat{P}}(A_i, Z_i, X_i)\big\}_{i=1}^{n_1}$ using the nuisance estimators from step 1. 
    \item Regress $\big\{\dot{\Gamma}_{\widehat{P}}(Y_i, Z_i, X_i)\big\}_{i=1}^{n_1}$ and $\big\{\dot{\Delta}_{\widehat{P}}(A_i, Z_i, X_i)\big\}_{i=1}^{n_1}$ on $V$ using $D_1$, yielding two estimators $\widehat{\Gamma}(v)$ and $\widehat{\Delta}(v)$ of the following regression functions:
    \begin{equation}
        \Gamma_0(v) := \E_0\big[\dot{\Gamma}_0(Y,Z,X) \mid V = v\big]\, \text{ and }\,\Delta_0(v) := \E_0\big[\dot{\Delta}_0(A,Z,X) \mid V = v\big].
    \end{equation}
    \item Report $\widehat{\Gamma}(v_0)/\widehat{\Delta}(v_0)$ as an estimate of $\chi_0(v_0)$.
    \item Repeat steps 2 to 4 using bootstrap resampled data $\widetilde{D}_1$ from $D_1$ and report the $\alpha/2$th and $(1-\alpha/2)$th quantiles of the estimator.
\end{enumerate}
\end{algorithm}
The procedure for estimating CLATEs in Algorithm~\ref{algo:conditional_l} differs from the one for estimating LATEs in Algorithm~\ref{algo:late} in a crucial way. Step 3 involves an additional regression step, where a set of so-called psuedo-outcomes (i.e., the estimated uncentered influence functions) are regressed on the effect modifiers $V$, and the resulting estimator is the ratio of the estimated conditional expectation evaluated at $v_0$. Note that this additional regression step can be done with any nonparametric method. As before, so long as flexible estimation methods are used for the nuisance functions, this method is fully nonparametric.

Nonparametric inference for irregular functionals with no influence function is an active area of research \citep{bibaut2017data, takatsu2022debiased}. Although the literature typically studies the limiting distribution for specific estimators, the behavior of our proposed estimator heavily depends on the particular regression function estimator used in step 3 of Algorithm~\ref{algo:conditional_l}. Here, we use the bootstrap to estimate the variance of the resulting estimator. Under some conditions, this approach works with many choices of the regression procedure used in step 3. Specifically, we repeat steps 2--4 in Algorithm~\ref{algo:conditional_l} over the bootstrap samples of $D_1$. The $(1 - \alpha)$-level bootstrap confidence interval is given by the $\alpha/2$th and $(1-\alpha/2)$th quantiles of the estimator based on bootstrap samples. 

We note that traditional bootstrap theory requires repeating the entire process from steps 1--4. However, it is practically infeasible to repeat step 1 as it typically involves fitting complex ML methods with large data, which in our case is time consuming. Establishing formal theoretical requirements for the bootstrap procedure when step 1 is computationally expensive is an interesting avenue for future research. In this work, we proceed by assuming that the nuisance estimation errors from step 1 are negligible compared to the bootstrap errors from steps 2--4, which is plausible for larger sample sizes. 

\section{Nonparametric Profiling and Sensitivity Analysis}
\label{sec:mono}

 In this section, we develop DRML-based methods for two key analytic tasks related to the monotonicity assumption: profiling and sensitivity analysis. 

\subsection{Profiling}

Profiling, which refers to calculating descriptive statistics for the complier, always-taker, and never-taker populations, is often used to infer whether the LATE might differ in magnitude from the ATE. This has been recommended as a necessary part of any LATE-focused IV analysis \citep{swanson2013commentary,Baiocchi:2014}. For instance, we can calculate whether the healthiest patients are likely to be always-takers, and the sickest patients are never-takers. In such case, the LATE most likely differs from the ATE, as operative care would likely have different effects among these sub-populations.


The concept of profiling was first outlined for compliers in \citet{angrist2008mostly}. Specifically, \citet{angrist2008mostly} use Abadie's $\kappa$-weights from semiparametric instrumental variable methods to estimate complier profiles. \citet{Baiocchi:2014} extended complier profiling methods and introduced the technique to health service research and epidemiology. Later research extended profiling to the always-taker and never-taker subpopulations \citep{marbach2020profiling}. The analytic goal of profiling is to compare the covariate distributions for the complier, always-taker and never-taker populations to the overall patient population profile. If profiling reveals that the complier subpopulation is different with respect to observable covariates that are likely to be predictive of treatment effect magnitude, this suggests that the LATE may not be close to the ATE. 

We propose DRML-based methods for nonparametric density estimation of the covariate distributions for compliers, always-takers, and never-takers. We first present the method for a discrete random variable $V$. Assuming SUTVA, \ref{as:relevance}, \ref{as:unconfounded}, \ref{as:monotonicity} and $0 < P(Z=1\mid X) < 1$ almost surely, we can identify the density for baseline covariate at $V=v_0$ among the compliers as follows:
\begin{align}
    P(V=v_0 \mid A(1) > A(0)) =\frac{\E_P\big[\delta_P(X)\mathds{1}(V=v_0)\big]}{\E_P\big[\delta_P(X)\big]}\label{eq:functional_complier}
\end{align}
We provide a derivation of the above identification result in the Supplementary Material. We note that \ref{as:excl_restriction} is not required for the identification above. The denominator of the estimand~\eqref{eq:functional_complier} is identical to that of the standard LATE, which can be estimated as $\mathbb{P}_n\dot{\Delta}_{\widehat{P}}$. The uncentered influence function for the numerator of \eqref{eq:functional_complier} is simply given by $(y, a, z, x') \mapsto \dot{\Delta}_0(a,z,x')I(v'=v_0)$. Here, $v'$ is the component of $x'$ corresponding to $V \subseteq X$. This follows from so-called the product rule of influence functions; Example 2 of \cite{kennedy2022semiparametric} provides an analogous derivation for the case with $V=X$. The DRML-based estimator of \eqref{eq:functional_complier} is then as follows:
\begin{equation}
\frac{\mathbb{P}_n\dot{\Delta}_{\widehat{P}}\mathds{1}\big(V=v_0\big)}{\mathbb{P}_n\dot{\Delta}_{\widehat{P}}} = \frac{\mathbb{P}_n\left[\left(\frac{2Z - 1}
      {\widehat{\pi}}\left\{A -
      \widehat{\lambda}\right\} +
      \widehat{\delta} \right)\mathds{1}\big(V=v_0\big)\right]}{\mathbb{P}_n\left[\frac{2Z - 1}{\widehat{\pi}}\left\{A -
      \widehat{\lambda}\right\} +
      \widehat{\delta}\right]}.
\end{equation}
This estimator is also consistent when $\mathbb{P}_n\dot{\Delta}_{\widehat{P}}$ is consistent in view of the continuous mapping theorem, and thus inherits the double-robust property of the ATE estimator $\widehat{\Delta}$. 

To construct $(1-\alpha)$-level confidence intervals, we can use Lemma~\ref{lemma:IF} in the context of our estimator except $\dot{\Gamma}^*_P$ and ${\Gamma}_P$ are now replaced with $\dot{\Delta}_P(A, Z, X)\mathds{1}(V=v_0)-\E_P\big[\delta_P(X)\mathds{1}(V=v_0)\big]$ and $E_P[\delta_P(X)\mathds{1}(V=v_0)]$, respectively.


Similarly for always-takers and never-takers, we have the following identification results:
\begin{align}
    P(V=v_0 \mid A(1) = A(0)=1) 
    &= \frac{\E_P\big[\lambda_P(X, 0)\mathds{1}(V=v_0)\big]}{\E_P\big[\lambda_P(X, 0)\big]}\nonumber
\end{align}
\noindent and 
\begin{align}
    P(V=v_0 \mid A(1) = A(0)=0) = \frac{\E_P\big[\left\{1-\lambda_P(X, 1)\right\}\mathds{1}(V=v_0)\big]}{\E_P\big[1-\lambda_P(X, 1)\big]}\nonumber.
\end{align}

For a continuous covariate $V$, we propose a kernel density estimator of $p(v_0 \mid A(1) > A(0))$:
\begin{equation}
\frac{\mathbb{P}_n\dot{\Delta}_{\widehat{P}}K_{v_0,h}}{\mathbb{P}_n\dot{\Delta}_{\widehat{P}}}\, \quad\text{where}\quad  \, K_{v_0,h}:= v\mapsto\frac{1}{h}K\left(\frac{v-v_0}{h}\right).
\end{equation}
Here, $K$ is a non-negative kernel function and $h>0$ is a so-called smoothing bandwidth parameter. This estimator is consistent assuming that $h \longrightarrow 0$, $nh \longrightarrow \infty$ and the second derivative of the true density function is bounded. Despite the consistency of the estimator, valid statistical inference for this parameter is challenging to obtain. When the bandwidth parameter is selected to achieve the optimal rate of convergence of the estimator, the smoothing-bias becomes asymptotically non-negligible, so that the confidence intervals are no longer centered at the target parameter \citep{wasserman2006all}. Alternatively, \citet{kennedy2021semiparametric} studies the estimation of a covariate-adjusted density function under a semiparametric framework. They consider the projection of the target density function onto a working model with respect to a chosen metric. This methodology can be explored in the context of profiling as a future work.


\subsection{Sensitivity Analysis}

In Section~\ref{section:causal_estimand}, we raised concerns about the validity of the monotonicity assumption in our study. We outlined one plausible scenario where defiers may exist: when a surgeon's preferences represent a consideration of various risks and benefits, cases may arise where a patient is treated differently from a physician's overall preference, resulting in defiers.

To address the potential violation of monotonicity, we conduct a novel sensitivity analysis to examine the robustness of our estimates to such violations. \citet{Angrist:1994} demonstrated that when monotonicity is violated, the IV estimand can be written as the sum of the standard IV estimand and a term that depends on two parameters, $\delta_1$ and $\delta_2$:
\begin{equation}
    \mathbb{E}_0[Y(1)-Y(0) \mid A(1) >  A(0)]=\chi_0 +\frac{\delta_1\delta_2}{\Delta_0}.
\end{equation}
where $\delta_1$ and $\delta_2$ are defined as:
\begin{align}
    \delta_1 & := P_0(A(1) < A(0))\\
    \delta_2 & :=\mathbb{E}_0[Y(1)-Y(0) \mid A(1) < A(0)]-\E_0[Y(1)-Y(0) \mid A(1) > A(0)].
\end{align}
In words, $\delta_1$ represents the proportion of defiers, a value between $0$ and $1$, and $\delta_2$, on the other hand, represents the difference in average treatment effects between the defiers and compliers. When the outcome is binary, $\delta_2$ must take a value between $-2$ and $2$. The above expression recovers the standard identification result of the LATE when either $\delta_1$ or $\delta_2$ is zero. On the other hand, values of  $\delta_1$ or $\delta_2$ that are large in magnitude may imply a discrepancy between the standard IV estimand and the LATE. 

We can estimate the standard IV estimand and $\Delta_0$ using the DRML-based methods and treat $\delta_1$ and $\delta_2$ as free parameters in our sensitivity analysis. We now propose a simple analysis based on the mapping:
\begin{equation}
    \widehat\xi := (\delta_1, \delta_2) \mapsto \widehat{\chi} +\frac{\delta_1\delta_2}{\mathbb{P}_n\dot{\Delta}_{\widehat{P}}}
\end{equation}
which provides the estimated difference between the LATE and the IV estimand. This mapping is consistent when $\widehat{\chi}$ is consistent, implying the double robustness of the sensitivity mapping.

In our application, we plot the estimated upper or lower bounds on the LATE as a joint function of both $\delta_1$ and $\delta_2$, and find a set of $\delta_1$ and $\delta_2$ such that $\left\{\delta_1, \delta_2: \widehat\xi(\delta_1, \delta_2) = 0\right\}$. These values correspond to the minimal violation of monotonicity resulting in a sign change in the estimated LATE. Based on subject matter knowledge, we then evaluate whether the values of $\delta_1$ and $\delta_2$ at this frontier appear larger or smaller relative to plausible values for the study. Moreover, given a range of plausible values for $\delta_2$, we can estimate and interpret the magnitude of $\delta_1$ necessary to render $\widehat\xi$ zero.

\section{Simulation Study}
\label{sec:sim}

Next, we investigate the finite-sample properties of the DRML IV estimator in a simulation study. Our aim is to compare the DRML IV estimator to the standard TSLS estimator across a range of scenarios to understand the comparative performance of each method. We analyze three different scenarios that capture different levels of nonlinearity in the data generating process (DGP). We also vary the sample size to the probe the large-sample performance of the DRML IV estimator compared to TSLS. The results from the first scenario are reported in this section, while the results from the other two scenarios are provided in Supplementary Material. First, we describe the common elements of the DGP across the scenarios we consider.  

Across all the simulation scenarios, we generate a set of baseline covariates: $X = (X_1, X_2)$, where $X_1 \sim \mathrm{Unif}(-1,1)$, $X_2 \sim \mathrm{Bernoulli}(0.3)$, such that $X_1 \independent X_2$, and an unobserved covariate $U \sim \mathrm{Unif}(-1.5, 1.5)$ such that $U \independent X$. Generation of $Z$, $A$, and $Y$ then proceeds on the basis of various functional forms for $\pi_0(x, z) = P_0(Z = z \mid X = x)$, $\lambda_0^{\dagger}(x, z, u) = \mathbb{E}_0(A \mid X = x, Z = z, U = u)$, and $\mu_0^{\dagger}(x, a, u) = \mathbb{E}_0(Y \mid X = x, A = a, U = u)$. Specifically, we generate the instrument as $Z \sim \mathrm{Bernoulli}(\pi_0(X, 1))$, the treatment as $A \sim \mathrm{Bernoulli}(\lambda_0^{\dagger}(X, Z, U))$, and the outcome as $Y \sim N(\mu_0^{\dagger}(X, A, U), 0.2^2)$. For the specification of $\mu_0^{\dagger}$, we adopt an additive model $\mu_0^{\dagger}(X, A, U) = r(X)A + s(X) + 1.5 U$, where $r(X)$ represents the conditional average treatment effect, and $s(X)$ the baseline mean of the outcome.
For each simulation scenario, described in detail below, we vary the sample size $n$ and the functional forms of $\pi_0$, $\lambda_0^{\dagger}$, $r$ and $s$. We note that, with respect to the observed data distribution, any model $\lambda_0^{\dagger}$ is consistent with the monotoncity assumption. That is, we could strictly enforce monotonicity by generating
$A(0) \sim \mathrm{Bernoulli}(\lambda_0^{\dagger}(X, 0, U))$, then 
\[A(1) = A(0) + (1 - A(0))\mathrm{Bernoulli}\left(\frac{\lambda_0^{\dagger}(X, 1, U) - \lambda_0^{\dagger}(X, 0, U)}{1 - \lambda_0^{\dagger}(X, 0, U)}\right),\]
and finally set $A = (1 - Z)A(0) + ZA(1)$, but this also results in $\mathbb{E}_0(A \mid X, Z, U) = \lambda_0^{\dagger}(X, Z, U)$. However, explicit generation of $(A(0), A(1))$ as above is useful for computing the true underlying LATE in the scenarios below: the LATE is the mean of $r(X)$ given $A(1) > A(0)$.
 
For each scenario, we compute and evaluate three estimators of the LATE. First, we use the TSLS estimator using the \texttt{AER} package in \texttt{R}, which assumes the main effects of $X_1, X_2$ are linear. Next, we use the DRML IV estimator, which is based on estimates of the nuisance functions $\pi_0$, $\lambda_0$, $\mu_0$ --- the latter two are induced by $\lambda_0^{\dagger}$, $\mu_0^{\dagger}$, and the distribution of $U$. We employ one DRML IV estimator that uses parametric models for the nuisance functions, including only linear main effects for $X$ in each. We then employ a second DRML IV estimator where we estimate the nuisance functions using an ensemble of random forest fits. Specifically, the ensemble is based on fits from \texttt{rpart}, \texttt{rpartPrune}, and \texttt{glm} using the \texttt{SuperLearner} package in \texttt{R}. This set of estimators allows us to isolate instances in which flexible estimation of the nuisance functions is critical to reduce model misspecification. We refer to these as the parametric and nonparametric DRML estimators respectively. For all three methods, we record bias and root-mean squared error (RMSE). 


In the first scenario, the DGP departs from linear models with nonlinearity across a number of the models in the DGP. Specifically, we use the following functional forms for the models in the DGP: 
\begin{itemize}
\item
  $\pi_0(X, 1) = \mathrm{expit}(0.4 X_1 - 0.8X_2 + 0.4
  \mathds{1}(X_1 > 0))$
\item
  $\lambda_0^{\dagger}(X, Z, U) = \mathrm{expit}(-0.3 - 0.4X_1 - 0.14 X_2
  + 1.1 Z - 0.55 X_1 Z - 0.7 \mathds{1}(X_1 > 0) + 0.7 U)$
\item
  $r(\boldsymbol{X}) = - 4 X_1 + 6 \{X_2 - 0.3\} - 4
  \{\mathds{1}(X_1 > 0) - 0.5\}$
\item $s(\boldsymbol{X}) = 40 - 7X_1 - 8X_2 + 10\mathds{1}(X_1 > 0)$
\end{itemize}
We compute the true LATE using the quantities in the DGP as described earlier. In this DGP, there is a threshold effect for $X_1$ in each of the models. The nonparametric DRML IV estimator uses flexible estimation methods for the nuisance functions, and this should reduce bias from model misspecfication. However, we expect the parametric DRML IV estimator and TSLS to be biased due to model misspecification. Additionally, we are also interested in whether the nonparametric DRML IV estimator remains efficient as the sample size grows. 

Figure~\ref{fig:scenario1} contains the results from the simulation under scenario 1. The nonparametric DRML IV estimator does display some minor bias, however, the bias decreases with larger sample sizes. In theory, we should be able to reduce this bias further by including additional learners into the ensemble. Notably, the bias for the nonparametric DRML IV estimator is less than that from either of the parametric methods. In fact, the bias for TSLS is almost three times larger than for the nonparametric DRML IV estimator. Critically, these results highlight the need to use flexible methods of estimation for the nuisance functions. The parametric DRML IV estimator does reduce bias compared to TSLS, but not nearly to the extent of the nonparametric DRML IV estimator.

Next, we review the results for RMSE. When the sample size is smaller, here 500, the difference in RMSE between TSLS and DRML IV is smaller. However, once the sample size is 1000, the nonparametric DRML IV estimator clearly dominates the other two methods in terms of RMSE, with additional gains being possible with larger sample sizes. This result is not surprising given the level of bias present for the parametric methods of estimation.

\begin{figure}[htbp]
  \centering
  \includegraphics[width=0.9\linewidth]{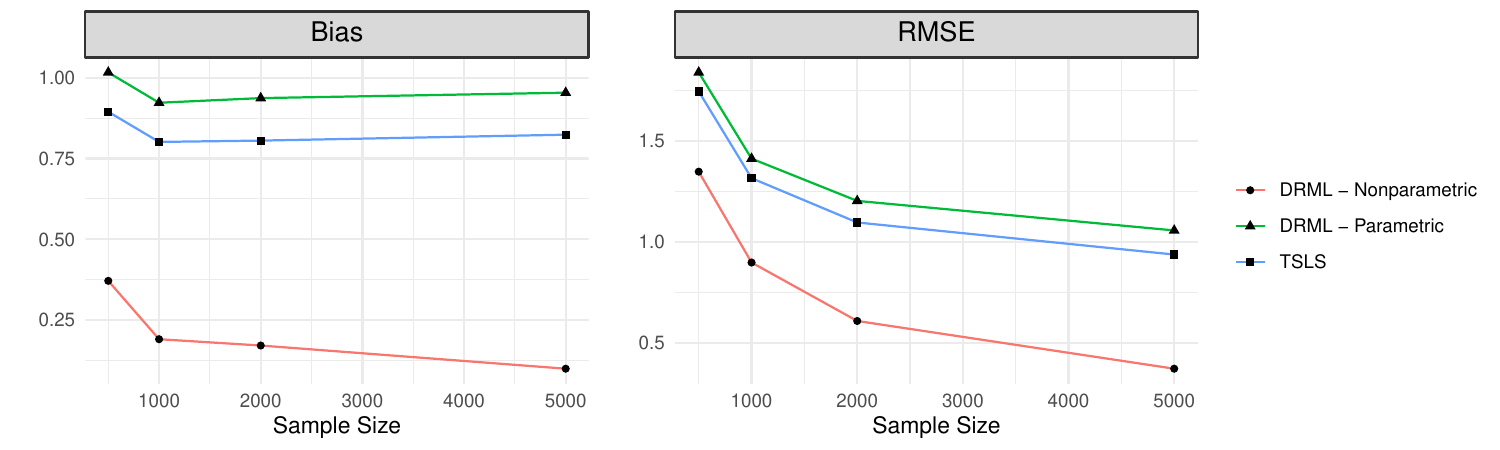}
    \caption{Comparing bias and RMSE for TSLS and DRML IV under simulation scenario 1.}
  \label{fig:scenario1}
\end{figure}

\section{The Assessment of the Efficacy of Surgery for Cholecystitis}
\label{sec:results}
\subsection{DRML IV Estimates}
     
We now apply our methods to the analysis of whether operative management is effective for patients with cholecystitis. First, we provide a more in-depth discussion of the data and specifically the instrument.  As we noted above, the data include a small set of demographic measures including indicators for race, sex, age, and insurance type. Next, the data include measures of baseline patient frailty. The measures of frailty are indicators for severe sepsis or septic shock and pre-existing disability. Next, there are indicators for 31 comorbidities based on Elixhauser indices \citep{elixhauser1998comorbidity}. Comorbidities are prior medical conditions that may complicate treatment for cholecystitis. These comorbidities include conditions such as anemia, hypertension, and obesity. Finally, we adjust for hospital level effects using fixed effects for each hospital, which is equivalent to conducting a within-hospital analysis. We consider such a within-hospital approach an essential part of the analysis, since it implies that we are comparing patients who receive care in the same hospital, which should hold fixed other systemic factors of care that might affect outcomes. In our judgement, this increases the plausibility of unconfoundedness and the exclusion restriction.

We measure the IV in our study based on the percentage of times a surgeon operates when presented with an EGS condition (TTO). We plot the TTO distribution in Figure~\ref{fig:tto}. While many surgeons operate more than 50\% of the time, there is wide variation in the TTO distribution. For a surgeon at the average of the TTO distribution, he or she operates 67\% of the time. 
Next, we split TTO at the median and calculated the percentage of times a patient received surgery for cholecystitis for the two categories. We find that patients that received care from a surgeon with a TTO below the median had an operation 45\% of the time. For patients that received care from a surgeon with a TTO above the median, he or she had an operation 95\% of the time.

\begin{figure}[htbp]
  \centering
    \includegraphics[scale=0.5]{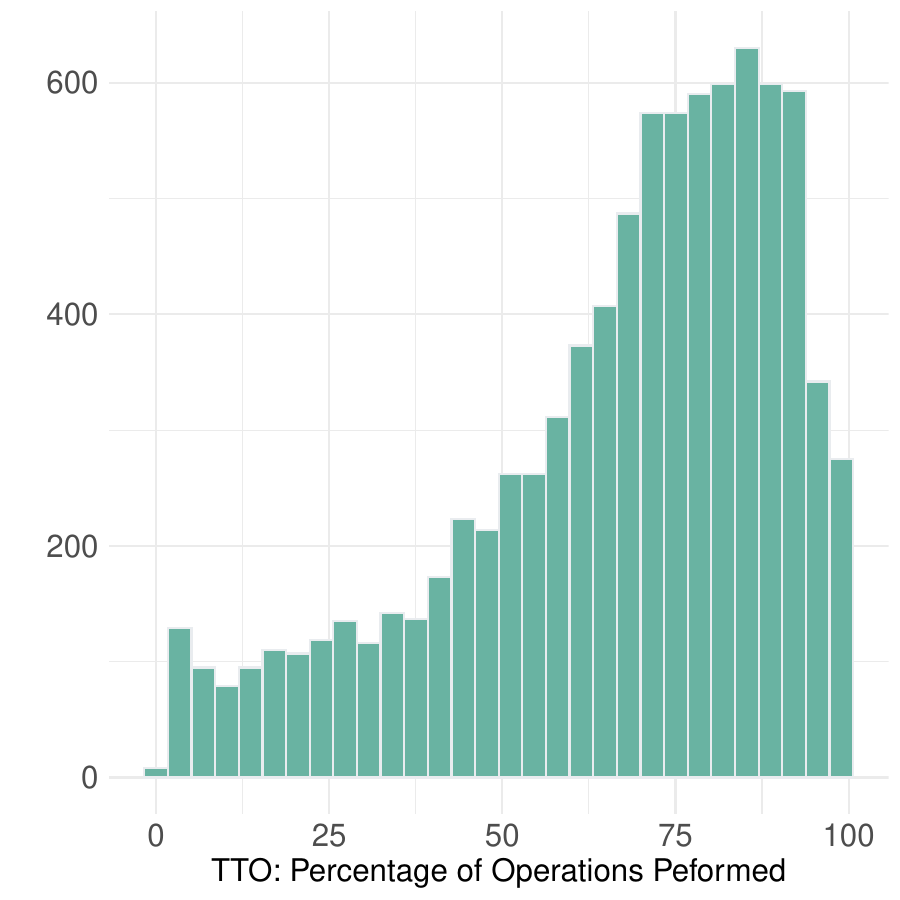}
    \caption{The distribution of tendency to operate measured as the percentage of times a surgeon operates.}
  \label{fig:tto}
\end{figure}

Our primary outcome is an adverse event following treatment. The measure of adverse events is an indicator that is 1 if a patient either died or had a prolonged length of stay. Prolonged length of stay (PLOS) is defined as hospital and operation-specific length of stay being greater than the 75th percentile. 

The first phase in the analysis is estimating the effect of surgery on adverse outcomes using IV methods, i.e., targeting the LATE. In this step of the analysis, we contrast DRML methods with more standard parametric methods. Next, we profile the key principal strata and review the likelihood of treatment by complier-status interactions. We then explore whether the treatment effects are heterogenous and seek to understand whether any variation is associated with key baseline covariates.

\begin{figure}[htbp]
  \centering
    \includegraphics[scale=0.40]{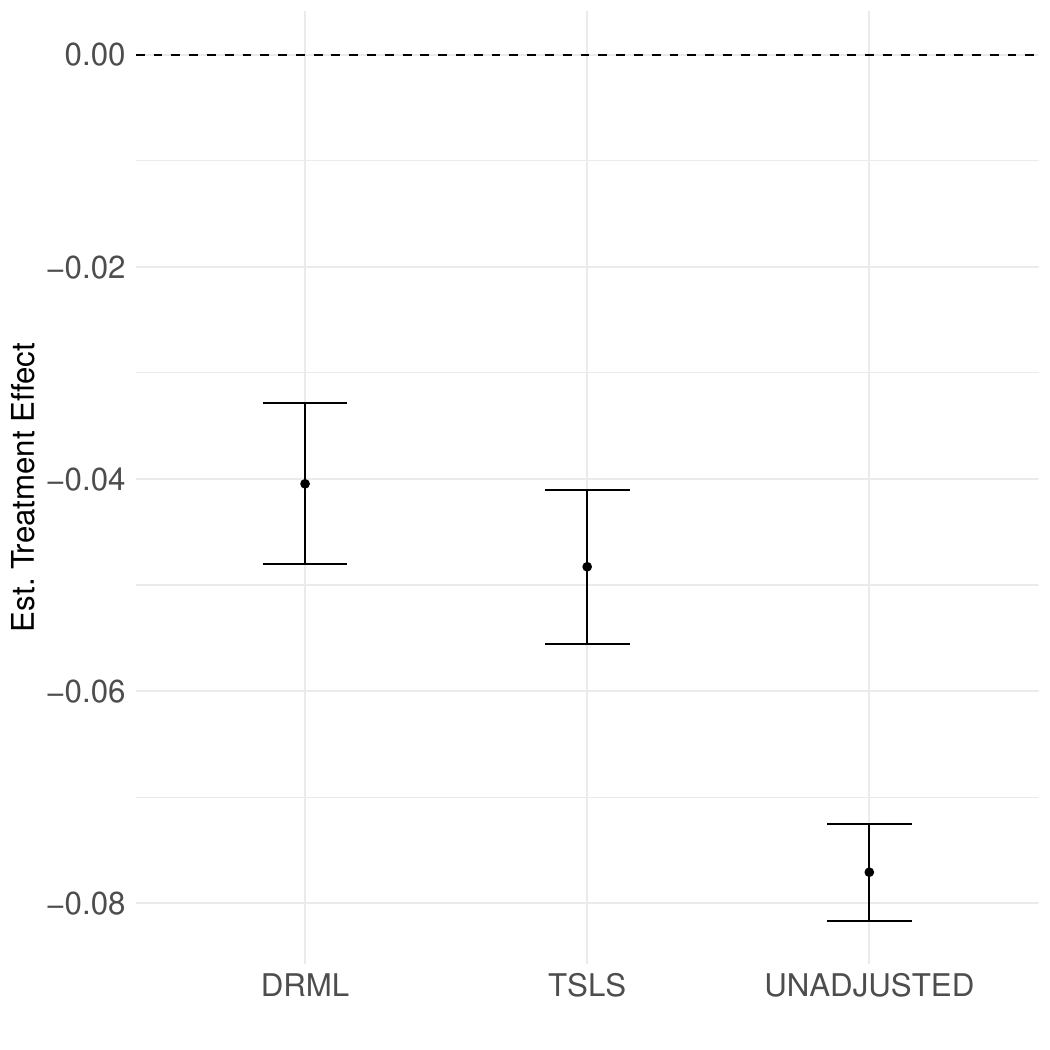}
    \caption{Estimates for the effect of surgery on adverse outcomes for patients with cholecystitis.}
  \label{fig:outs}
\end{figure}

For our DRML estimator of the LATE, one key practical decision necessary for the applied analysis is the choice of ML methods for estimation of the nuisance functions. We can select either a single learner such as a random forest or instead we can use an ensemble of learners. In our analysis, we use an ensemble of three different learners: a generalized linear model, a generalized additive model, and a random forest.  Figure~\ref{fig:outs} displays the results for three different estimation strategies: unadjusted, adjustment via a parametric model (TSLS), and adjustment via our DRML estimator. For the unadjusted analysis, we estimate the risk of an adverse event as function of being exposed to surgery without controlling for any baseline covariates. In the unadjusted analysis, we find that the risk of an adverse outcome is nearly 8\% lower for patients that had surgery compared to patients that underwent medical management. Given the invasive nature of surgery, it may be the case that healthier patients are more likely to be selected for surgery. This bias may well be in operation, given that the two adjusted IV estimates are nearly 50\% smaller than the unadjusted estimate. That is, once we adjust for observed covariates and account for unobserved confounders with a LATE-focused IV approach, we find that the effect of surgery is still beneficial --- the risk of an adverse outcome is approximately 4\% lower for the compliers. That is, the LATE in our context is the effect of surgery for those patients that had surgery they had a surgeon with a higher preference for operative care. We explore how the compliers differ from always-takers and never-takers next. In addition, the confidence intervals do not include zero. The DRML IV estimates do not differ significantly from estimates based on the fully parametric TSLS. In brief, for any given analysis, the additional flexibility of ML methods may not change the results compared to parametric methods. However, the additional flexibility protects against possible bias from model misspecification.

\subsection{Profiling}

Next, we conduct a profiling analysis to describe the characteristics of patients in each of the key principal strata. Figure~\ref{fig:profile} contains covariate profiles for compliers, always-takers, and never-takers for a subset of key covariates. Across many of the covariates the estimates for the never-taker subpopulation tend to be imprecisely estimated due to limited sample size. However, it is often the case that the always-taker and complier populations differ in important respects. For example, always takers are much less likely to be septic compared to compliers. In addition, they are less likely to have some form of disability or be Medicare patients. Overall, this pattern is consistent with other work that has found that for condition like these always-takers tend to be healthier than compliers, while never-takers tend to display greater frailty \citep{kaufman2022operative}.

\begin{figure}[htbp]
  \centering
    \includegraphics[scale=0.5]{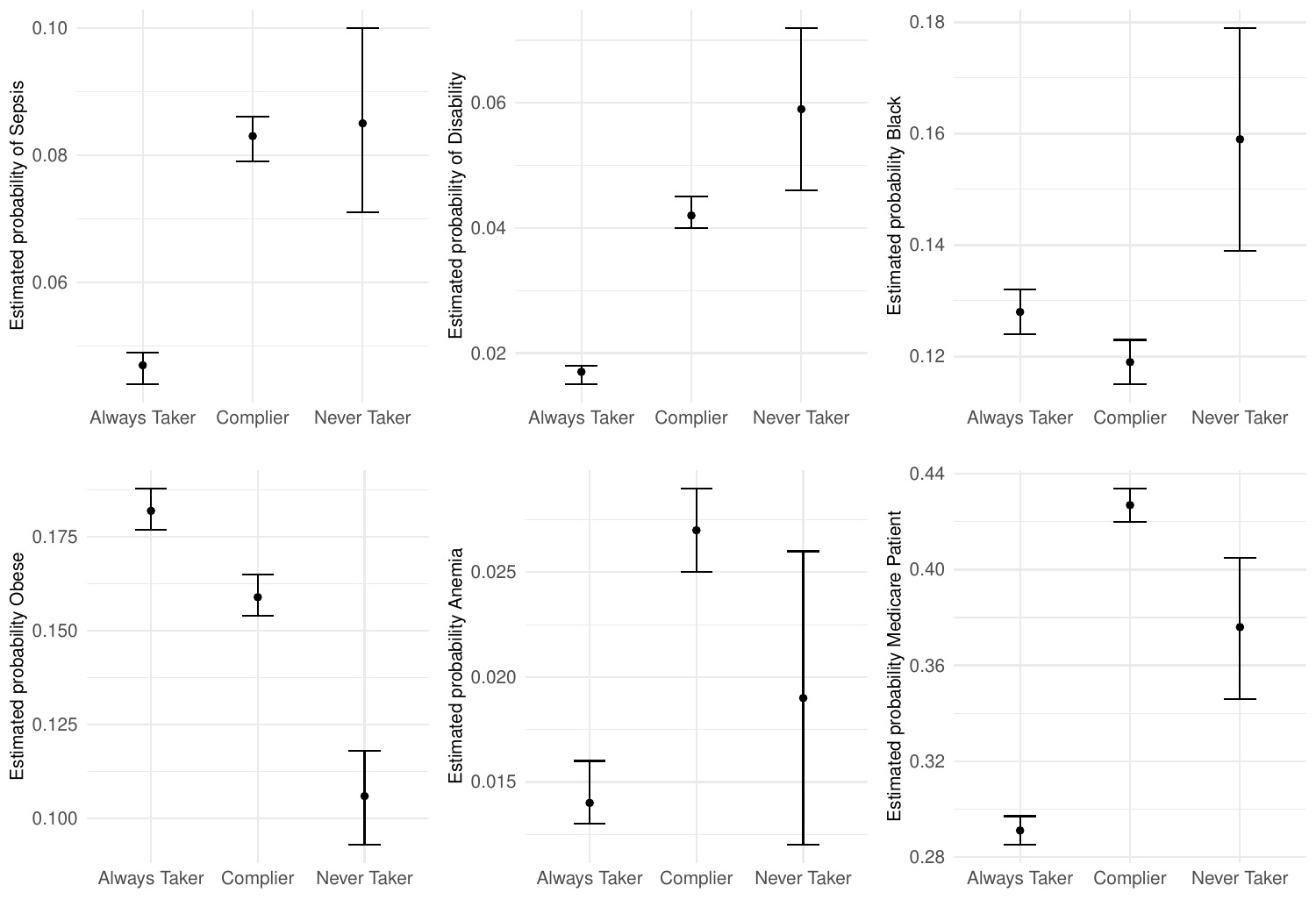}
    \caption{Covariate profiles for compliers, always-takers, and never-takers for selected baseline covariates: indicators for sepsis, disability, Black, obese, anemia, and medicare insurance}
  \label{fig:profile}
\end{figure}

Next, we conduct a more refined profiling analysis for age. Extant profiling analysis approaches based on parametric methods would estimate the average age among each of the principal strata. Our DRML IV methods allow us to estimate a smoothed density for continuous covariates by principal strata. Figure~\ref{fig:age-profile} contains the smoothed age density for compliers, always-takers, and never-takers. In the first panel, we observe that always-takers are mostly likely to be patients between 30 and 60. That is, younger patients are more likely to be surgical patients---most likely due to the fact that it is assumed they are more likely to tolerate the invasive nature of surgery. The smoothed density for compliers rises with age until it levels off around age 50. As such, we observe that a surgeons TTO tends to be decisive for older patients.

\begin{figure}[htbp]
  \centering
    \includegraphics[scale=0.6]{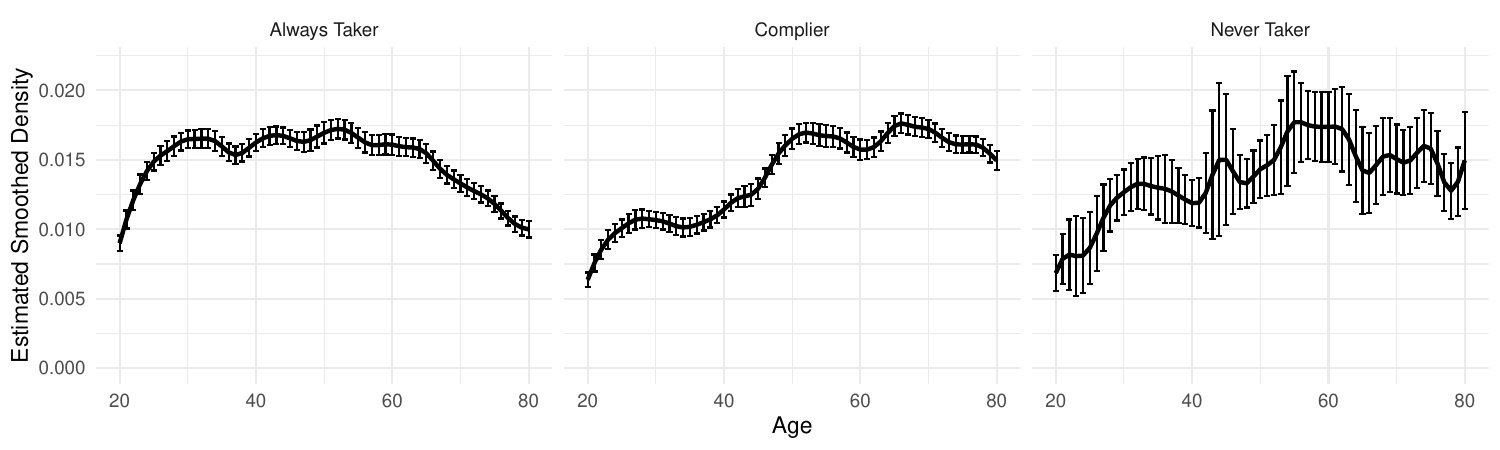}
    \caption{Estimated smoothed density for age by principal strata.}
  \label{fig:age-profile}
\end{figure}

\subsection{Heterogenous Treatment Effects}

Next, we consider the possibility that the effect of surgery varies by key patient subgroups. In our analysis, we focus on three key baseline covariates: age, sepsis, and the number of comorbidities. We identified these three covariates as important effect modifiers, since the effectiveness of surgery is often a function of baseline health. Patients that are generally healthy before they undergo surgery tend to respond better to surgical treatment. As such, older patients, patients with multiple comorbidities, and those with sepsis may be at higher risk for an adverse event even if surgery is generally effective on average.

As we outlined above, the DR-Learner approach to heterogenous treatment effect estimation provides estimates of individual level treatment effects (ITEs) that vary with effect modifiers.  The ITEs are then aggregated by the key effect modifiers to allow us to observe whether the treatment effect of surgery varies by that covariate. We begin with Figure~\ref{fig:ite} which displays the distribution of ITEs for the patient population in the study. The distribution of the ITEs provides some insight into the overall level of effect heterogeneity in our analysis. In Figure~\ref{fig:ite}, the dashed line marks the sample average, which corresponds to the DRML estimate reported in Figure~\ref{fig:outs}. We observe that a large part of the distribution is shifted below zero which represents a beneficial effect of surgery. However, the average effect also hides considerable variation. That is, we also observe that a substantial mass of the distribution of individual treatment effects is concentrated above zero, which implies that for these patients surgery caused an adverse outcome. 

\begin{figure}[htbp]
  \centering
    \includegraphics[scale=0.45]{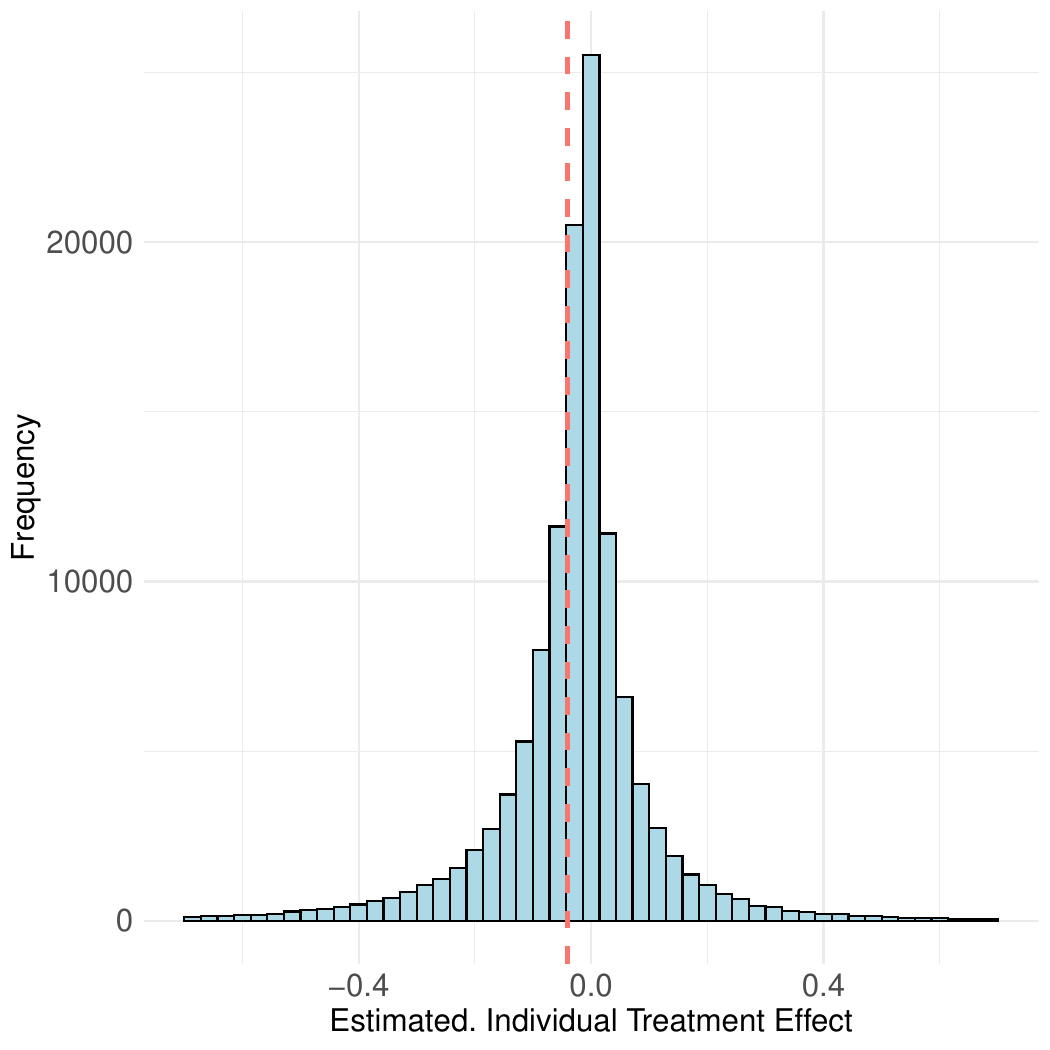}
    \caption{Distribution of individuals treatment effects for the risk of adverse event after surgery.}
  \label{fig:ite}
\end{figure}

Next, we seek to investigate whether specific baseline covariates explain variation in the individual level treatment effects. To that end, we aggregate the individual level treatment effects to estimate conditional average treatment effects for specific patient subgroups. Critically, our method does not impose any linearity assumptions on how the estimated effects vary as a function of a multi-valued effect modifier such as age or the number of comorbidities. First, we use the number of comorbidities to estimate CLATEs. We use the number of comorbidities, since it is often a strong predictor of adverse outcomes. Figure~\ref{fig:cate1} contains the average treatment effect conditional on the number of comorbidities. Based on the results in this figure, it does not appear that the number of comorbidities explain the variation in treatment effects. That is, the average effect within each subgroup is very close to if not identical to the estimated LATE.

\begin{figure}[htbp]
  \centering
    \includegraphics[scale=0.45]{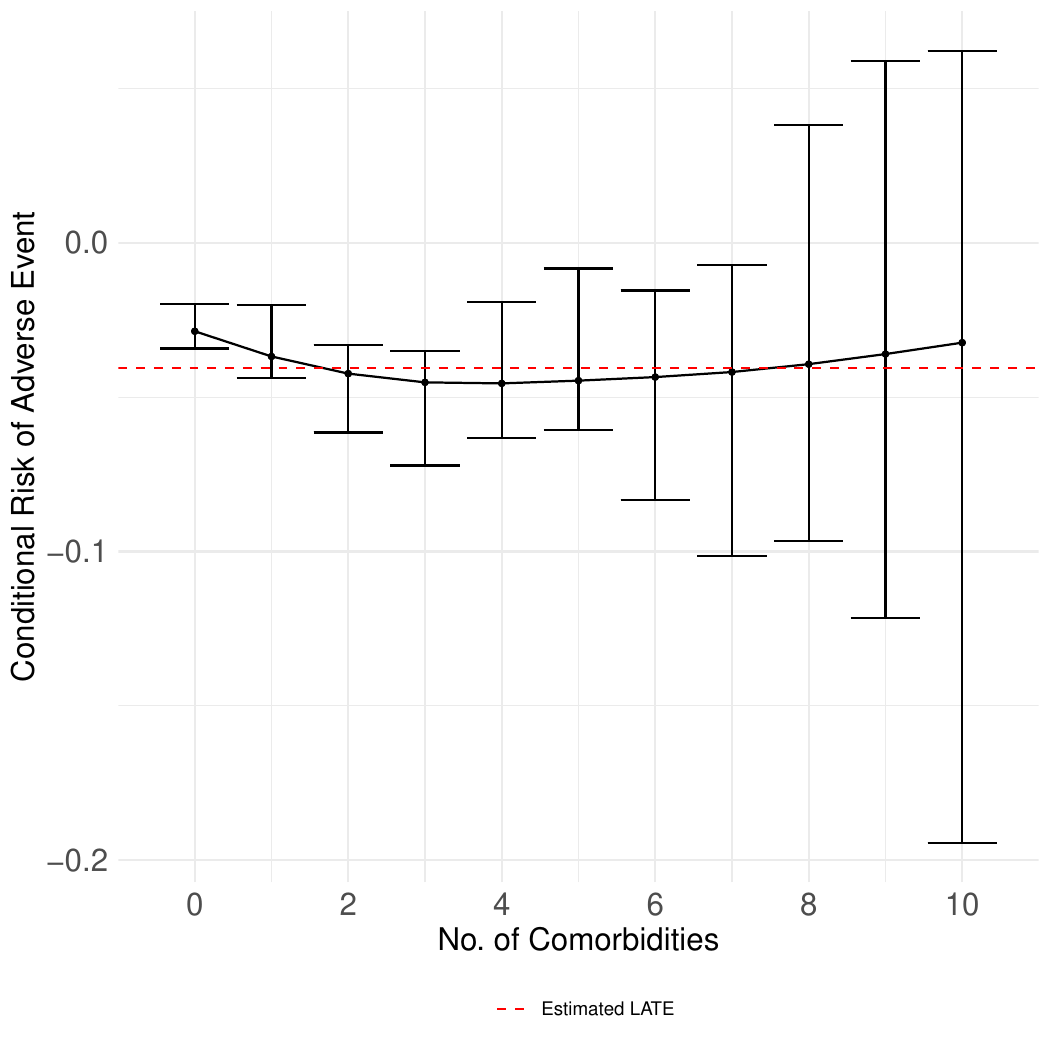}
    \caption{Conditional average treatment effects for surgery by the number of comorbidities.}
  \label{fig:cate1}
\end{figure}

Another key advantage of using DRML methods for treatment effect heterogeneity is that we can investigate whether effects vary by more complex patient subgroups. Next, we estimate how the treatment effects vary by sepsis, age, and the number of comorbidities. That, is we might expect older, septic patients with a higher number of comorbidities to be at the highest risk for an adverse effect. Figure~\ref{fig:cate2} contains two heatmaps that each display the CLATE by age, the number of comorbidities, and sepsis status. First, we find that septic patients are more likely to have an adverse outcome.  That is, for non-septic patients, we observe that the effect of surgery is minimal to protective, with the benefit increasing with age. For septic patients, the probability of an adverse event increases with the number of comorbidities.  Interestingly, older patients with a small number of comordibities also benefited from surgery. Thus, it appears that sepsis and comorbidities appear to be important effect modifiers. We stress that these results should be considered as exploratory. Validation of these results would require hypothesis testing with a second data source.

\begin{figure}[htbp]
  \centering
    \includegraphics[scale=0.6]{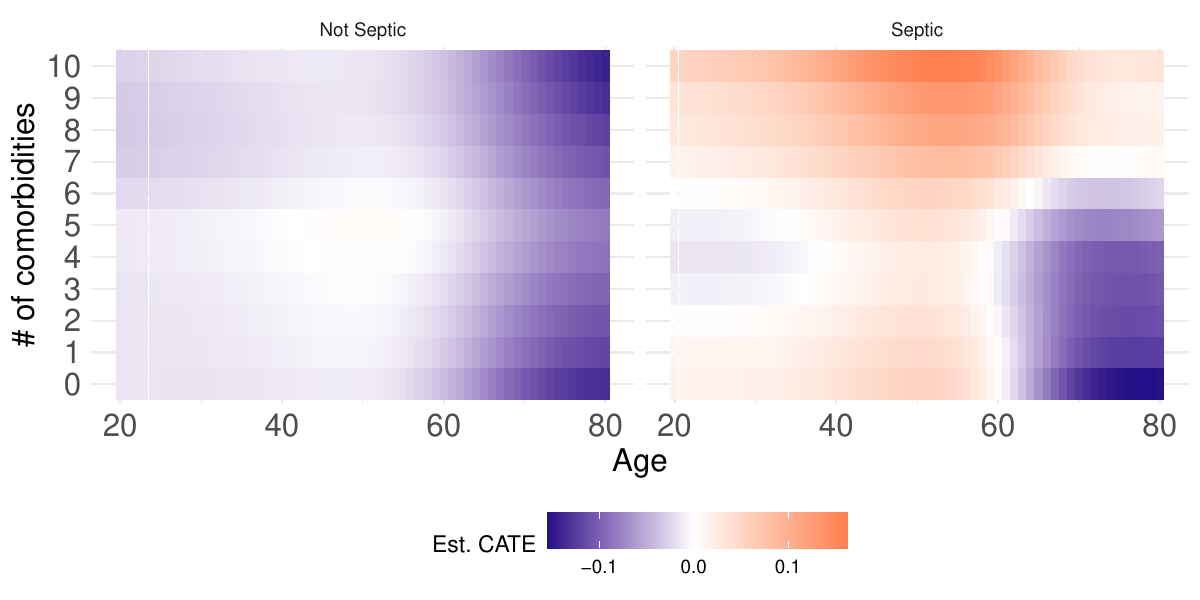}
    \caption{Conditional average treatment effects for surgery by sepsis, age, and the number of comorbidities.}
  \label{fig:cate2}
\end{figure}

\subsection{Sensitivity Analysis}

Lastly, we investigate whether our study results are robust to violations of the monotonicity assumption. As we outlined above, instruments like TTO may be prone to violations of the monotonicity assumption.  This can occur when a physician treats a patient contrary to his or her general preference for surgery. In our proposed sensitivity analysis, we plot how the upper bound for the estimated LATE varies as a function of the two sensitivity parameters: $\delta_1$, and $\delta_2$. We plot the results in Figure~\ref{fig:mono}. In Figure~\ref{fig:mono}, the x-axis corresponds to $\delta_1$ and y-axis corresponds to $\delta_2$. The dashed line represents the frontier where the upper bound on the LATE equals zero. 

In our context, this dashed line divides the plot into two regions. In the lower-left region, the upper bound on the LATE indicates that surgery is beneficial, since it reduces the risk of an adverse event. In the upper-right region, the upper bound indicates on the LATE indicates that surgery is potentially harmful, since a positive LATE corresponds to an increased risk of an adverse event with surgery. How do we use this plot to assess sensitivity?  We think it is helpful to focus in on a few values for $\delta_1$, and assess the corresponding ranges of $\delta_2$ values that are consistent with the sign of the original LATE estimate. For example, we note that if 25\% of the population were defiers, the sign on the LATE would be unchanged if the average difference in defier and complier risks is less than 10\%.  If the proportion of defiers is less than 5\%, the sign of the estimated LATE also remains unchanged if the average difference in defier and complier risks is 50\% or below.

\begin{figure}[htbp]
  \centering
    \includegraphics[scale=0.45]{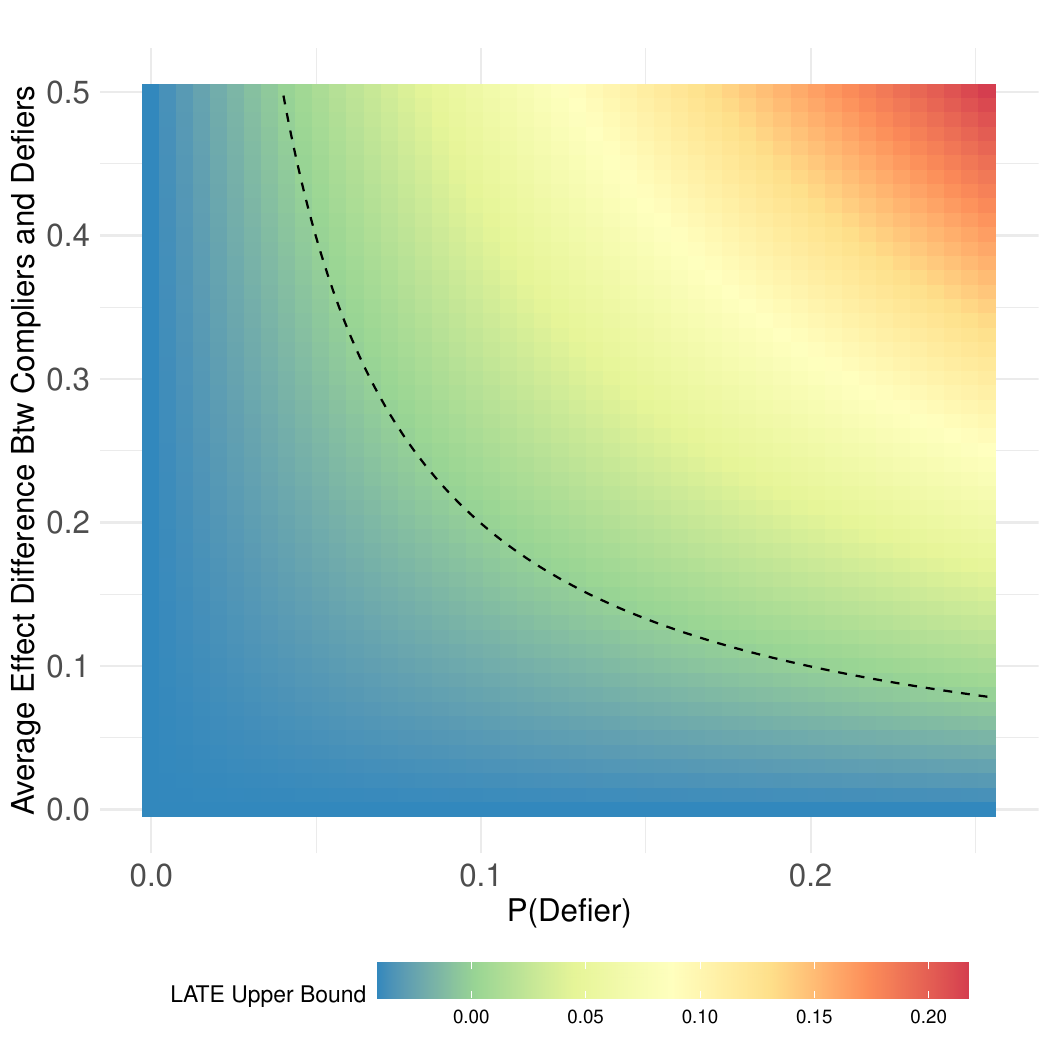}
    \caption{Upper bound on the estimate of LATE under the different levels of the monotonicity assumption violations.}
  \label{fig:mono}
\end{figure}

We observe that our primary estimate of the LATE (assuming monotonicity) is robust, with respect to its sign, over most reasonable values of the sensitivity parameters. For instance, we can allow up to 10\% of defiers if surgical treatment for them is up to 20\% more likely to induce adverse events compared to compliers. Based on these results, we may conclude that the sign of treatment effect is robust to reasonable violations of monotonicity. 

\section{Conclusion}
\label{sec:conc}

Methodologically, we offer a unified estimation framework for IV methods that combines ML methods to avoid bias from model misspecification, while maintaining established inferential properties. Our simulation study demonstrates he effectiveness of the DRML IV estimator in reducing bias and achieving parametric-like estimation error rates and inferential properties. Furthermore, we leveraged the DRML framework to develop novel methods for other aspects of IV designs. Specifically, we used DRML methods to profile compliance classes under principal stratification, and we extended the DR-Learner for heterogeneous effects to the IV framework. This enables us to examine the potential variation in the effect of surgery across different patients by assessing key effect modifiers. Finally, we developed a new method of sensitivity analysis targeted at the monotonicity assumption, which can be questionable in the case of IVs such as TTOs. Importantly, our findings display the robustness of our estimates when faced with fairly large deviations from this assumption.

Our study also contributes to the developing literature on optimal treatment strategies for EGS conditions. Given the ethical issues with randomization, the literature on this topic has been generally based on observational studies using IV methods. EGS conditions are heterogeneous, and, as a result, granular details are required about specific conditions to provide effective evidence-based guidance to clinicians. As such, various studies have focused on developing specific guidance for given EGS conditions and subgroups \citep{kaufman2022operative,moler2022local,hutchings2022effectiveness,keeleegsiv2018,
rosen2022analyzing}. Our study contributes to this body of evidence by focusing on cholecystitis --- inflammation of the gallbladder, which is one of the most common EGS conditions. Our finding indicates that, on average, operative management for cholecystitis led to lower rates of adverse events. Moreover, we found that the number of comorbidities was not predictive of variation in the effect of surgery.  However, age emerged as a strong effect modifier, as patients over the age of 70 seemed to be more susceptible to adverse events following surgery.

There are several avenues for future developments. Firstly, a broader set of simulations and applied analyses would be valuable in providing guidance on the most robust ensemble of learners across diverse settings. Secondly, more investigation is needed to elucidate the formal properties of the proposed DR-Learner of the CLATE under various methods for second-stage regression. Thirdly, it will also be of theoretical interest to determine minimax rates and corresponding optimal estimators of the CLATE, as was explored for the conditional average treatment effect in \citet{kennedy2022minimax}. Finally, it will be important to extend the ideas proposed in this paper to IV settings that do not assert monotonicity, moving beyond the CLATE and LATE, e.g., under certain structural models for the outcome, or alternatively constructing tight bounds on the (conditional) average treatment effect.

\clearpage

\singlespacing

\bibliographystyle{asa}
\bibliography{learner, IV-ML}
\clearpage




\renewcommand\appendixname{Supplement}

\titleformat{\section}
  {\normalfont\fontsize{11}{15}\bfseries}{\thesection}{1em}{}

\titleformat{\subsection}
  {\normalfont\fontsize{11}{15}\bfseries}{\thesubsection}{1em}{}
  






\newtheorem{assumption}{Assumption}
\def\ci{\perp\!\!\!\perp}

	\noindent%
	{\bfseries\Large
Supplement to ``Doubly robust machine learning for an instrumental variable study of surgical care for cholecystitis''}

\noindent%
	
	\noindent%

\renewcommand{\theequation}{S\arabic{equation}}
\renewcommand{\thetable}{S\arabic{table}}
\renewcommand{\thefigure}{S\arabic{figure}}
\renewcommand\theassumption{S.\arabic{assumption}}
\renewcommand\thetheorem{S.\arabic{theorem}}
\begin{appendices}
\section{Calculus of influence functions}
In this section, we use the notation $P$ to represent a generic distribution that belongs to a statistical model denoted as $\mathcal{P}$. Let $P_0 \in \mathcal{P}$ represent the unknown data-generating distribution of observations $O$, and let $\mathbb{P}_n$ denote its empirical distribution function associated with $\{O_1, \dots, O_n\}$. For any $P$-integrable function $f$, we define $Pf = \int f dP$. Specifically, we have $\mathbb{P}_n f=\frac{1}{n}\sum_{i=1}^n f(O_i)$. We also adopt the notation $\|f\|= (P_0 f^2)^{1/2}$. For $a, b \in \mathbb{R}$, we define $a \wedge b = \min(a, b)$.

We consider $\psi_1, \psi_2$ as real-valued functionals $\mathcal{P} \mapsto \mathbb{R}$ where $\mathcal{P}$ is a collection of distributions or a nonparametric model. We assume that these functionals are differentiable in a suitable sense ~\citep{van2002semiparametric}, allowing for the existence of corresponding influence functions. We denote the influence functions as $\dot \varphi_1^*$ and $\dot \varphi_2^*$, and the uncentered influence functions as $\dot \varphi_1$ and $\dot \varphi_2$, meaning that $P_0\dot \varphi_j=\psi_j(P_0)$ for $j = 1, 2$. Define an estimator of $\psi_j(P_0) = P_0 \dot{\varphi}_j$ given by $\widehat{\psi}_j = \mathbb{P}_n \widehat{\varphi}_j$, the sample mean of estimated uncentered influence functions. Then $\widehat{\psi}_j$ satisfies the following error decomposition:
    \begin{align*}
       \widehat{\psi}_j - \psi_j(P_0) =  \mathbb{P}_n \widehat{\varphi}_j +\mathbb{P}_n \dot{\varphi}_j-\mathbb{P}_n \dot{\varphi}_j- P_0 \dot{\varphi}_j=\mathbb{P}_n \dot{\varphi}^*_j + R_1+ R_2 
   \end{align*}
where 
\begin{align*}
    R_1 := (\mathbb{P}_n-P_0)(\widehat{\varphi}_j-\dot{\varphi}_j)\, \quad \text{and} \quad R_2:=P_0(\widehat{\varphi}_j-\dot{\varphi}_j).
\end{align*}
Suppose for now that both remainder terms $R_1$ and $R_2$ are $o_P(n^{-1/2})$, then the estimator is asymptotically linear:
\begin{align*}
    \mathbb{P}_n \widehat{\varphi}_j - P_0 \dot{\varphi}_j =\mathbb{P}_n \dot{\varphi}^*_j + o_P(n^{-1/2}).
\end{align*}
This representation immediately establishes the consistency of the estimator by the weak law of large numbers. Moreover, if we assume that the variance of $\dot{\varphi}^*_j$ is finite, we can establish the asymptotic normality of the estimator in view of the central limit theorem. Consequently, this informs the construction of confidence intervals that are asymptotically exact.

The first remainder term $R_1$ is commonly referred to as an empirical process term. When the estimator $\widehat{\varphi}_j$ is constructed using independent data from $\mathbb{P}_n$, for instance by sample-splitting or cross-fitting, Lemma 2 of \citet{kennedy2020sharp} states that: 
\begin{align*}
    R_1 = O_P(n^{-1/2}\|\widehat{\varphi}_j-\dot{\varphi}_j\|).
\end{align*}
This term becomes $o_P(n^{-1/2})$ if $\|\widehat{\varphi}_j-\dot{\varphi}_j\|=o_P(1)$, or in other words, $\widehat{\varphi}_j$ converges in quadratic mean to $\dot{\varphi}_j$. The term $R_2$ is often known as an ``asymptotic bias'' or ``asymptotic drift'' term, and is typically second-order, i.e., can be bounded by a product or square of nuisance function errors.
Therefore, $R_2 = o_P(n^{-1/2})$ is implied when, for example,
relevant nuisance errors converge in $L_2$-norm to zero at rate $o_P(n^{-1/4})$. This dependence on the product of nuisance error biases has been referred to as rate double-robustness \citep{rotnitzky2021}. 
In sum, asymptotic linearity of the estimator $\widehat{\psi}_j$ is guaranteed when $\|\widehat{\varphi}_j-\dot{\varphi}_j\|=o_P(1)$, $R_2 = o_P(n^{-1/2})$, and when employing sample-splitting. 

The next result shows that the fraction of two asymptotically linear estimators is also asymptotically linear.

\begin{lemma}\label{lemma:asymptotic_linear}
Suppose for $j=1,2$, $\mathbb{P}_n\widehat{\varphi}_j-P_0\dot{\varphi_j} = \mathbb{P}_n\dot{\varphi}^*_j + o_P(n^{-1/2})$ where $\dot{\varphi}^*_j$ is a mean-zero function. Assuming there exists $\varepsilon >0$ such that $|\mathbb{P}_n\widehat{\varphi}_2| \wedge |P_0\dot{\varphi}_2| > \varepsilon$, 
    \begin{align*}
        \frac{\mathbb{P}_n\widehat{\varphi}_1}{\mathbb{P}_n\widehat{\varphi}_2}- \frac{P_0\dot{\varphi_1}}{P_0\dot{\varphi_2}} &= \mathbb{P}_n\left\{(P_0\dot{\varphi_2})^{-1}\left(\dot{\varphi}^*_1 -\dot{\varphi}^*_2\frac{P_0\dot{\varphi_1}}{P_0\dot{\varphi_2}}\right)\right\}+ o_P(n^{-1/2}).
    \end{align*}
\end{lemma}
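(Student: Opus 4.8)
The plan is to treat this as a delta-method argument for the smooth map $g(a,b)=a/b$: linearize the quotient, identify the leading term as the claimed influence function, and control the remainder using the two given expansions together with the lower bound $\varepsilon$ on the denominators. Throughout I would abbreviate $\widehat\psi_j := \mathbb{P}_n\widehat\varphi_j$ and $\psi_j := P_0\dot\varphi_j$, so the hypothesis reads $\widehat\psi_j - \psi_j = \mathbb{P}_n\dot\varphi_j^* + o_P(n^{-1/2})$. Since each $\dot\varphi_j^*$ is mean-zero with finite variance (the standing assumption used for the central limit theorem), $\mathbb{P}_n\dot\varphi_j^* = O_P(n^{-1/2})$, and hence $\widehat\psi_j - \psi_j = O_P(n^{-1/2}) = o_P(1)$; in particular each $\widehat\psi_j$ is consistent for $\psi_j$, a fact I will use to handle the denominator.

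First I would record the exact algebraic identity
\begin{align*}
\frac{\widehat\psi_1}{\widehat\psi_2} - \frac{\psi_1}{\psi_2}
= \frac{(\widehat\psi_1-\psi_1)\psi_2 - \psi_1(\widehat\psi_2-\psi_2)}{\widehat\psi_2\,\psi_2},
\end{align*}
obtained by writing the difference over the common denominator $\widehat\psi_2\psi_2$ and adding and subtracting $\psi_1\psi_2$ in the numerator. Substituting the two expansions and using that $\psi_1,\psi_2$ are fixed constants, the numerator becomes $\psi_2\,\mathbb{P}_n\dot\varphi_1^* - \psi_1\,\mathbb{P}_n\dot\varphi_2^* + o_P(n^{-1/2})$, which is itself $O_P(n^{-1/2})$.

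The one step that needs care is replacing the random denominator $\widehat\psi_2\psi_2$ by its deterministic limit $\psi_2^2$. The hypothesis $|\widehat\psi_2|\wedge|\psi_2|>\varepsilon$ keeps the denominator bounded away from zero, so I can write
\begin{align*}
\frac{1}{\widehat\psi_2\,\psi_2} - \frac{1}{\psi_2^2}
= \frac{\psi_2-\widehat\psi_2}{\widehat\psi_2\,\psi_2^2},
\end{align*}
whose modulus is at most $\varepsilon^{-3}\,|\widehat\psi_2-\psi_2| = o_P(1)$ by consistency. Multiplying this $o_P(1)$ factor against the $O_P(n^{-1/2})$ numerator gives a term that is $o_P(n^{-1/2})$ (using $o_P(1)\cdot O_P(n^{-1/2}) = o_P(n^{-1/2})$), which is absorbed into the remainder. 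What survives is $\psi_2^{-2}$ times the leading numerator,
\begin{align*}
\frac{1}{\psi_2^2}\left(\psi_2\,\mathbb{P}_n\dot\varphi_1^* - \psi_1\,\mathbb{P}_n\dot\varphi_2^*\right)
= \mathbb{P}_n\left\{\frac{1}{\psi_2}\left(\dot\varphi_1^* - \dot\varphi_2^*\frac{\psi_1}{\psi_2}\right)\right\},
\end{align*}
which upon restoring $\psi_j = P_0\dot\varphi_j$ is exactly the stated influence function; the fixed constant $\psi_2^{-2}$ multiplying the leftover $o_P(n^{-1/2})$ from the numerator is still $o_P(n^{-1/2})$.

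The main obstacle, and really the only non-mechanical point, is this denominator control: one must combine the uniform lower bound $\varepsilon$ with consistency of $\widehat\psi_2$ to justify linearizing $1/(\widehat\psi_2\psi_2)$ around $1/\psi_2^2$, and then verify that each cross term between the $o_P(1)$ denominator error and the $O_P(n^{-1/2})$ numerator is genuinely $o_P(n^{-1/2})$ rather than merely $o_P(1)$. Everything else is the bookkeeping of the quotient rule.
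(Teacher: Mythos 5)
Your proof is correct and follows essentially the same route as the paper's: both linearize the quotient, identify the leading term $\mathbb{P}_n\{(P_0\dot{\varphi}_2)^{-1}(\dot{\varphi}^*_1-\dot{\varphi}^*_2\,P_0\dot{\varphi}_1/P_0\dot{\varphi}_2)\}$, and use the lower bound $\varepsilon$ together with $\widehat{\psi}_2-\psi_2=O_P(n^{-1/2})$ to show the denominator-replacement cross term is $o_P(n^{-1/2})$. The only difference is cosmetic bookkeeping (you put everything over the common denominator $\widehat{\psi}_2\psi_2$, while the paper first factors out $1/\mathbb{P}_n\widehat{\varphi}_2$ and then splits off $1/P_0\dot{\varphi}_2$), and both arguments implicitly invoke finite variance of $\dot{\varphi}^*_j$ to get the $O_P(n^{-1/2})$ rates, which you flag explicitly.
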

\begin{proof}
    By adding and subtracting relevent terms, we obtain
    \begin{align*}
        \frac{\mathbb{P}_n\widehat{\varphi}_1}{\mathbb{P}_n\widehat{\varphi}_2}- \frac{P_0\dot{\varphi_1}}{P_0\dot{\varphi_2}} &= \frac{1}{\mathbb{P}_n\widehat{\varphi}_2}\left(\mathbb{P}_n\widehat{\varphi}_1-\mathbb{P}_n\widehat{\varphi}_2\frac{P_0\dot{\varphi_1}}{P_0\dot{\varphi_2}}\right) \\
        &=\frac{1}{P_0\dot{\varphi_2}}\left(\mathbb{P}_n\widehat{\varphi}_1-\mathbb{P}_n\widehat{\varphi}_2\frac{P_0\dot{\varphi_1}}{P_0\dot{\varphi_2}}\right)+\left(\frac{1}{\mathbb{P}_n\widehat{\varphi}_2}-\frac{1}{P_0\dot{\varphi_2}}\right)\left(\mathbb{P}_n\widehat{\varphi}_1-\mathbb{P}_n\widehat{\varphi}_2\frac{P_0\dot{\varphi_1}}{P_0\dot{\varphi_2}}\right).
    \end{align*}
    For the first term of the above display, it follows that
    \begin{align*}
        \mathbb{P}_n\widehat{\varphi}_1-\mathbb{P}_n\widehat{\varphi}_2\frac{P_0\dot{\varphi_1}}{P_0\dot{\varphi_2}}&=\left(\mathbb{P}_n\dot{\varphi}^*_1 +P_0\dot{\varphi_1}\right)-\left(\mathbb{P}_n\dot{\varphi}^*_2+P_0\dot{\varphi_2}\right) \frac{P_0\dot{\varphi_1}}{P_0\dot{\varphi_2}}+ o_P(n^{-1/2})\\
        &=\mathbb{P}_n\left(\dot{\varphi}^*_1 -\dot{\varphi}^*_2\frac{P_0\dot{\varphi_1}}{P_0\dot{\varphi_2}}\right)+ o_P(n^{-1/2}).
    \end{align*}
    For the second term, we obtain that 
    \begin{align*}
        &\left(\frac{1}{\mathbb{P}_n\widehat{\varphi}_2}-\frac{1}{P_0\dot{\varphi_2}}\right)\left(\mathbb{P}_n\widehat{\varphi}_1-\mathbb{P}_n\widehat{\varphi}_2\frac{P_0\dot{\varphi_1}}{P_0\dot{\varphi_2}}\right)\\
        &\qquad =\left(\frac{P_0\dot{\varphi_2}-\mathbb{P}_n\widehat{\varphi}_2}{\mathbb{P}_n\widehat{\varphi}_2P_0\dot{\varphi_2}}\right)\left((\mathbb{P}_n\widehat{\varphi}_1-P_0\dot{\varphi_1})+P_0\dot{\varphi_1}\left(1-\frac{\mathbb{P}_n\widehat{\varphi}_2}{P_0\dot{\varphi_2}}\right)\right).
    \end{align*}
    By the assumption that both $\mathbb{P}_n\widehat{\varphi}_2$ and $P_0\dot{\varphi_2}$ are bounded away from zero as well as the asymptotic linearity of both estimators, we conclude that
    \begin{align*}
        &\left(\frac{1}{\mathbb{P}_n\widehat{\varphi}_2}-\frac{1}{P_0\dot{\varphi_2}}\right)\left(\mathbb{P}_n\widehat{\varphi}_1-\mathbb{P}_n\widehat{\varphi}_2\frac{P_0\dot{\varphi_1}}{P_0\dot{\varphi_2}}\right) = O_P(n^{-1}) +  O_P(n^{-1}) = o_P(n^{-1/2}).
    \end{align*}
    This concludes the proof of the claim.
\end{proof}
The proof of Lemma 3.2 follows as a corollary. We introduced the following conditions in the main text:
\begin{enumerate}[label=\textbf{(A\arabic*)},leftmargin=2cm]
\setcounter{enumi}{4}
\item $\|\widehat{\pi}-\pi_0\| \max(\|\widehat{\lambda}-\lambda_0\|, \|\widehat{\mu}-\mu_0\|) = o_P(n^{-1/2})$
\label{as:doubl-robust}
\item $\|\widehat{\pi}-\pi_0\| =o_P(1)$, $\|\widehat{\mu}-\mu_0\| =o_P(1)$ and $\|\widehat{\lambda}-\lambda_0\| =o_P(1)$. \label{as:empirical-process}
\item $|Y|$ is bounded almost surely
\label{as:bounded-y} 
\end{enumerate}
\begin{proof}[\bfseries{Proof of Lemma 3.2}]
First assuming that sample-splitting is used to estimate $\mu_0$, $\lambda_0$ and $\pi_0$, Lemma 2 of \citet{kennedy2020sharp} implies that the empirical process term $R_1$ is controlled so long as $\lVert \widehat{\varphi}_j - \dot{\varphi}_j\rVert = o_P(1)$, for $j = 1, 2$, which is implied by ~\ref{as:empirical-process}, \ref{as:bounded-y} as well as the assumption that $\widehat \pi$ and $\pi$ are bounded away from zero almost surely (See Example 2 in Section 4.2 of \cite{kennedy2022semiparametric}). 

We will now establish $R_2 = o_P(n^{-1/2})$. By the well-known product bias for the influence function of the average treatment effect functional \citep{chernozhukov2018double}, and a standard application of the Cauchy-Schwarz inequality, we have 
\begin{align*}
    |P_0(\dot{\Gamma}_{\widehat{P}}-\dot{\Gamma}_{0})| \lesssim \|\widehat{\mu}-\mu_0\|\|\widehat{\pi}-\pi_0\| \quad \text{and} \quad |P_0(\dot{\Delta}_{\widehat{P}}-\dot{\Delta}_{0})| \lesssim \|\widehat{\lambda}-\lambda_0\|\|\widehat{\pi}-\pi_0\|,
\end{align*}
where we write ``$\|\widehat{\mu}-\mu_0\|$'' and ``$\|\widehat{\lambda}-\lambda_0\|$'' in place of $\|\widehat{\mu}(\cdot , 0) - \mu_0(\cdot, 0)\| + \|\widehat{\mu}(\cdot , 1) - \mu_0(\cdot, 1)\|$ and $\|\widehat{\lambda}(\cdot , 0) - \lambda_0(\cdot, 0)\| + \|\widehat{\lambda}(\cdot , 1) - \lambda_0(\cdot, 1)\|$, respectively (See Example 2 in Section 4.3 of \cite{kennedy2022semiparametric}).
. Therefore, $R_2$ is controlled if $\|\widehat{\mu}-\mu_0\|\|\widehat{\pi}-\pi_0\|=o_P(n^{-1/2})$ and $\|\widehat{\lambda}-\lambda_0\|\|\widehat{\pi}-\pi_0\|=o_P(n^{-1/2})$, which is implied by \ref{as:doubl-robust}. Next, we claim $|\mathbb{P}_n\widehat{\varphi}_2| \wedge |P_0\dot{\varphi}_2| > \varepsilon$ for some $\varepsilon > 0$. This is implied assuming $\Delta_0 > \varepsilon_1$ and $\varepsilon_3 < \widehat\pi(X)< 1-\varepsilon_3$. 
We now invoke Lemma~\ref{lemma:asymptotic_linear} where $\widehat{\varphi}_1$ and $\dot{\varphi_1}$ correspond to $\dot{\Gamma}_{\widehat{P}}$ and $\dot{\Gamma}_0$. Similarly, $\widehat{\varphi}_2$ and $\dot{\varphi_2}$ correspond to $\dot{\Delta}_{\widehat{P}}$ and $\dot{\Delta}_0$. Then Lemma~\ref{lemma:asymptotic_linear} states 
\begin{align*}
        \widehat\chi - \chi_0 &= \frac{\mathbb{P}_n\dot{\Gamma}_{\widehat{P}}}{\mathbb{P}_n\dot{\Delta}_{\widehat{P}}} - \frac{P_0\dot{\Gamma}_0}{P_0\dot{\Delta}_0} + o_P(n^{-1/2}) \\
        &= \mathbb{P}_n \left\{\frac{1}{\Delta_0}\left(\dot{\Gamma}_0^*- \dot{\Delta}_0^*\chi_0\right)\right\}+ o_P(n^{-1/2})\\
      &=\mathbb{P}_n\left\{\frac{1}{\Delta_0}\left(\frac{2Z - 1}
      {\pi_{0}} \left[Y - \mu_{0} -
        \chi_0\{A - \lambda_{0}\}\right] +
      \gamma_0 - \chi_0\delta_0\right)\right\} + o_P(n^{-1/2}).
\end{align*}
The first term of above display corresponds to the sample mean of the efficient influence function of $\chi_0$ as provided by Lemma 3.1 of the main text. Thus, 
 by multiplying both sides by $n^{1/2}$, it implies that 
\begin{align*}
        n^{1/2}(\widehat\chi - \chi_0) =n^{1/2}\mathbb{P}_n\dot{\chi}^*_0 + o_P(1).
\end{align*}
Additionally, it follows that 
\begin{align*}
    \text{Var}\left(n^{1/2}\mathbb{P}_n\dot{\chi}^*_0\right) = P_0\dot{\chi}^{*2}_0 < \infty
\end{align*}
assuming $\text{Var}[Y] < \infty$, $\Delta_0 > \varepsilon_1$ and $\varepsilon_2 < P_0(Z = 1 \mid X) < 1-\varepsilon_2$. Therefore, we conclude  
\begin{equation}
    n^{1/2}(\widehat\chi - \chi_0)\overset{d}{\longrightarrow} N\left(0, \,P_0\dot{\chi}^{*2}_0\right).\nonumber
\end{equation}
in view of the central limit theorem and the Slutsky's theorem
\end{proof}
\clearpage
\section{Profiling}
For the derivation of the identification results, we frequently refer to the following assumptions from the main text:
\begin{enumerate}[label=\textbf{(A\arabic*)},leftmargin=2cm]
\item \textbf{Relevance:} \label{as:relevance}$P_0\big(A(1) = A(0)\big) \neq 1$,
\item \label{as:unconfounded}\textbf{Effective random assignment:} For all $z,a \in \{0,1\}$, $Z \indep \big(A(z), Y(z, a)\big) \mid X$,
\item \label{as:excl_restriction}\textbf{Exclusion restriction:} $Y(z,a) = Y(z',a)$ for all $z,z',a \in \{0,1\}$, and
\item \label{as:monotonicity}\textbf{Monotonicity:} $P_0\big(A(1) < A(0)\big) = 0$.
\end{enumerate}
The following (well-known) results also become useful. 
\setcounter{theorem}{1}
\begin{lemma}\label{lm:id1}
    Assuming 
 SUTVA, \ref{as:unconfounded}, \ref{as:monotonicity}, and $0 < P_0(Z = 1 \mid X) < 1$ almost surely, then, for $V\subseteq X$, it follows that 
    \begin{align*}
        &P_0(A(1) > A(0)\mid V=v) = \E_0\big(A\mid Z=1, V=v\big)-\E_0\big(A\mid Z=0, V=v\big), \\
        &P_0(A(1) = A(0)= 1\mid V=v) = \E_0\big(A\mid Z=0, V=v\big),\quad \text{and}\\
        &P_0(A(1) = A(0)= 0\mid V=v) = \E_0\big(1-A\mid Z=1, V=v\big).
    \end{align*}
\end{lemma}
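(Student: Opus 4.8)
The plan is to prove all three identities by the standard principal-stratification argument for instrumental variables, combining consistency (SUTVA), the conditional randomization of $Z$ in \ref{as:unconfounded}, and monotonicity in \ref{as:monotonicity}. The quantities on the right-hand side are functionals of the observed law, whereas those on the left are counterfactual probabilities within the $V=v$ stratum, so the content of the lemma is to bridge the two. First I would invoke consistency to write $A = Z A(1) + (1-Z)A(0)$, so that on the event $\{Z=z\}$ the observed treatment equals $A(z)$; this gives $\E_0(A\mid Z=z, V=v) = \E_0\big(A(z)\mid Z=z, V=v\big)$ for each $z\in\{0,1\}$.

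The core identification step is then to remove the conditioning on $Z$. Using $Z \ci A(z)\mid X$ from \ref{as:unconfounded}, together with the overlap condition $0<P_0(Z=1\mid X)<1$ (which guarantees both conditional means are well defined), I would apply the law of iterated expectations over $X$ to obtain $\E_0\big(A(z)\mid Z=z, V=v\big) = \E_0\big(A(z)\mid V=v\big) = P_0\big(A(z)=1\mid V=v\big)$. Combining with the previous display yields $\E_0(A\mid Z=z, V=v) = P_0\big(A(z)=1\mid V=v\big)$, which is the single fact on which all three identities rest.

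Monotonicity supplies the remaining bookkeeping. Since $P_0\big(A(1)<A(0)\big)=0$, we have $A(1)\ge A(0)$ almost surely, so $A(1)-A(0)$ takes values in $\{0,1\}$ and the population partitions into compliers ($A(1)>A(0)$), always-takers ($A(1)=A(0)=1$), and never-takers ($A(1)=A(0)=0$), with defiers excluded. For the first identity I would subtract the two conditional means and use linearity: $\E_0(A\mid Z=1,V=v)-\E_0(A\mid Z=0,V=v) = \E_0\big(A(1)-A(0)\mid V=v\big) = P_0\big(A(1)>A(0)\mid V=v\big)$, because $A(1)-A(0)$ is the indicator of the complier event. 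For the second, $\E_0(A\mid Z=0,V=v)=P_0\big(A(0)=1\mid V=v\big)$, and monotonicity forces $\{A(0)=1\}=\{A(1)=A(0)=1\}$ (if $A(0)=1$ then $A(1)\ge 1$). For the third, $\E_0(1-A\mid Z=1,V=v)=P_0\big(A(1)=0\mid V=v\big)$, and monotonicity forces $\{A(1)=0\}=\{A(1)=A(0)=0\}$.

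The step deserving the most care is the iterated-expectation swap in the second paragraph, since \ref{as:unconfounded} is stated conditional on the full covariate vector $X$ while the identities are indexed by the possibly coarser $V\subseteq X$. When $V=X$ the swap $\E_0\big(A(z)\mid Z=z, V=v\big)=\E_0\big(A(z)\mid V=v\big)$ is immediate; for strictly coarser $V$ it additionally requires the within-stratum law of $X$ not to depend on $Z$, i.e. $dP_0(x\mid Z=z, V=v)=dP_0(x\mid V=v)$, so that the $Z$-conditioning does not reweight $X$ inside the $V=v$ stratum. I would therefore make this condition explicit (or verify it in the relevant application) as part of justifying that step; once it is in place, everything else is routine manipulation of conditional expectations.
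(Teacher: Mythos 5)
Your argument follows essentially the same route as the paper's: consistency (SUTVA) to replace $A$ by $A(z)$ on $\{Z=z\}$, effective random assignment plus positivity to link the $Z$-conditional means to the counterfactual means by iterating over the full covariate $X$, and monotonicity for the bookkeeping that converts $\E_0(A(1)-A(0)\mid V=v)$, $P_0(A(0)=1\mid V=v)$, and $P_0(A(1)=0\mid V=v)$ into the three principal-stratum probabilities. The only difference is direction: the paper starts from the counterfactual side and writes $P_0(A(1)>A(0)\mid V=v)=\int \E_0(A(1)-A(0)\mid X=x)\,d\widetilde{Q}(v')$ before substituting observed conditional means, whereas you start from the observed conditional means and strip off the conditioning on $Z$. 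The three monotonicity steps are identical in substance.

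The condition you flag in your last paragraph is not merely a formality: it is genuinely needed, and the paper's proof silently assumes it. The paper's final equality, $\int \E_0(A\mid Z=z,X=x)\,d\widetilde{Q}(v') = \E_0(A\mid Z=z, V=v)$, is attributed to ``the tower property,'' but the tower property actually yields $\E_0(A\mid Z=z,V=v)=\int \E_0(A\mid Z=z,X=x)\,dP_0(v'\mid Z=z,V=v)$, which agrees with the displayed integral only when $dP_0(v'\mid Z=z,V=v)=dP_0(v'\mid V=v)$, i.e.\ when $Z$ and $V'=X\setminus V$ are conditionally independent given $V$. Assumption~\ref{as:unconfounded} does not deliver this: if, say, $A(1)$ is a deterministic function of $V'$ and $\pi_0$ depends on $V'$, then $\E_0(A\mid Z=1,V=v)\neq P_0(A(1)=1\mid V=v)$ even though $Z\indep A(1)\mid X$ holds. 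So you have correctly located a gap in the lemma as stated for strict subsets $V\subsetneq X$, rather than a gap in your own argument. Note that the downstream results are unaffected: Lemma~\ref{lm:identification} and the profiling functionals only use the $X$-conditional identities $\delta_0(X)=P_0(A(1)>A(0)\mid X)$ etc.\ averaged against the marginal law of $X$ (together with $\mathds{1}(V=v_0)$ being a function of $X$), which require no condition beyond those already assumed.
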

\begin{proof}
    Let $V'$ be $X \setminus V$ such that $X=(V, V')$. Furthermore, we assume that $V'$ follows the marginal density $d\widetilde{Q}$. It then follows that 
\begin{align*}
    P_0(A(1) > A(0)\mid V=v) &= \E_0(A(1) - A(0)\mid V=v) \\
    &= \int \E_0(A(1) - A(0)\mid X=x)\, d\widetilde{Q}(v')\\
    &= \int \left\{\E_0\big(A\mid Z=1, X\big)-\E_0\big(A\mid Z=0, X\big)\right\}\, d\widetilde{Q}(v') \\
    &= \E_0\big(A\mid Z=1, V=v\big)-\E_0\big(A\mid Z=0, V=v\big)
\end{align*}
where the first equality follows by \ref{as:monotonicity}, the second and the last equalities follow by the tower property of expectations, and the third equality follows by SUTVA, \ref{as:unconfounded}, and $0 < P_0(Z = 1 \mid X) < 1$ almost surely. Similarly, it follows that 
\begin{align*}
    &P_0(A(0) = A(1) = 1 \mid V=v)  \\
    &\qquad=\int \E_0(A(0)A(1) \mid X=x)\, d\widetilde{Q}(v')\\
    &\qquad= \int \E_0(A(0) \mid X=x)\, d\widetilde{Q}(v') && \text{By \ref{as:monotonicity}}\\
    &\qquad= \int \E_0(A(0) \mid X=x, Z=0)\, d\widetilde{Q}(v') &&\text{By \ref{as:unconfounded} and $P_0(Z=0\mid X)>0$ a.s.}  \\
    &\qquad= \int \E_0(A \mid X=x, Z=0)\, d\widetilde{Q}(v') &&\text{By SUTVA}\\
    &\qquad= \E_0(A \mid V=v, Z=0)
\end{align*}
and also that 
\begin{align*}
    &P_0(A(0) = A(1) = 0 \mid V=v)  \\
    &\qquad=\int \E_0(1-A(0)A(1) \mid X=x)\, d\widetilde{Q}(v')\\
    &\qquad= \int \E_0(1-A(1) \mid X=x)\, d\widetilde{Q}(v') && \text{By \ref{as:monotonicity}}\\
    &\qquad= \int \E_0(1-A(1) \mid X=x, Z=1)\, d\widetilde{Q}(v') &&\text{By \ref{as:unconfounded} and $P_0(Z=1\mid X)>0$ a.s.}  \\
    &\qquad= \int \E_0(1-A \mid X=x, Z=1)\, d\widetilde{Q}(v') &&\text{By SUTVA}\\
    &\qquad= \E_0(1-A \mid V=v, Z=1).
\end{align*}
This concludes the claim.
\end{proof}

Given the results above, the identification associated with the profiling parameters proceed as follows:
\begin{lemma}\label{lm:identification}Assuming 
 SUTVA, \ref{as:unconfounded}, \ref{as:monotonicity}, and $0 < P_0(Z = 1 \mid X) < 1$ almost surely, then, for $V\subseteq X$, it follows that 
 \begin{align*}
    &P_0(V=v_0\mid A(1) > A(0)) = \frac{\E_0[\mathds{1}(V=v_0) \{\E_0(A \mid X, Z=1)-\E_0(A \mid X, Z=0)\}]}{\E_0\{\E_0(A \mid X, Z=1)-\E_0(A \mid X, Z=0)\}}, \\
    &P_0(V=v_0\mid A(1) = A(0)= 1) = \frac{\E_0[\mathds{1}(V=v_0) \E_0(A \mid X, Z=0)]}{\E_0\{\E_0(A \mid X, Z=0)\}},\quad \text{and}\\
    &P_0(V=v_0\mid A(1) = A(0)= 0) =\frac{\E_0[\mathds{1}(V=v_0) \E_0(1-A \mid X, Z=1)]}{\E_0\{\E_0(1-A \mid X, Z=1)\}}.
\end{align*}
\end{lemma}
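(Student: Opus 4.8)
The plan is to reduce all three identities to Bayes' rule combined with Lemma~\ref{lm:id1} applied at the finest resolution $V = X$. Write $G$ for whichever of the three compliance events $\{A(1) > A(0)\}$, $\{A(1) = A(0) = 1\}$, or $\{A(1) = A(0) = 0\}$ is under consideration, and let $\mathds{1}_G$ denote its indicator; each such $G$ is a function of the potential treatments $(A(0), A(1))$ alone. Since the target is $P_0(V = v_0 \mid G) = P_0(V = v_0, G)/P_0(G)$, it suffices to find tractable expressions for the numerator and denominator separately and then divide.

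First I would handle the denominator. By the tower property, conditioning on the full covariate vector $X$,
\begin{align*}
    P_0(G) = \E_0\big[P_0(G \mid X)\big].
\end{align*}
For the numerator, the key observation is that because $V \subseteq X$, the indicator $\mathds{1}(V = v_0)$ is a measurable function of $X$, so it can be pulled outside the inner conditional expectation:
\begin{align*}
    P_0(V = v_0, G) = \E_0\big[\mathds{1}(V = v_0)\,\mathds{1}_G\big] = \E_0\big[\mathds{1}(V = v_0)\,\E_0(\mathds{1}_G \mid X)\big] = \E_0\big[\mathds{1}(V = v_0)\,P_0(G \mid X)\big].
\end{align*}
This already reproduces the numerator/denominator structure common to all three claimed formulas, leaving $P_0(G \mid X)$ as the only object still to be identified.

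The final step is to substitute the explicit form of $P_0(G \mid X)$. Applying Lemma~\ref{lm:id1} with the choice $V = X$ (legitimate since $X \subseteq X$) gives $P_0(A(1) > A(0) \mid X) = \E_0(A \mid Z = 1, X) - \E_0(A \mid Z = 0, X)$, $P_0(A(1) = A(0) = 1 \mid X) = \E_0(A \mid Z = 0, X)$, and $P_0(A(1) = A(0) = 0 \mid X) = \E_0(1 - A \mid Z = 1, X)$. Inserting each of these into the numerator and denominator expressions above yields the three stated identities in turn.

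I do not anticipate a serious obstacle: the argument is in essence Bayes' rule, and all of the substantive work—the use of SUTVA, \ref{as:unconfounded}, \ref{as:monotonicity}, and positivity $0 < P_0(Z = 1 \mid X) < 1$—has already been discharged inside Lemma~\ref{lm:id1}. The only points requiring care are (i) justifying that $\mathds{1}(V = v_0)$ is $X$-measurable so that the tower property may be applied conditioning on $X$ rather than on $V$, which is exactly the content of $V \subseteq X$; and (ii) reading the statement for discrete $V$, where $P_0(V = v_0 \mid G)$ is a genuine conditional probability mass, while for continuous components of $V$ the same manipulation delivers the corresponding conditional-density (Radon--Nikodym) version.
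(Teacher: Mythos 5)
Your proof is correct and follows essentially the same route as the paper's: Bayes' rule to separate numerator and denominator, followed by Lemma~\ref{lm:id1} to identify the compliance-type probabilities. The only cosmetic difference is that you condition on the full covariate vector $X$ (invoking Lemma~\ref{lm:id1} with $V=X$) rather than integrating over the marginal of $V$, which in fact lands directly on the stated $X$-conditional expressions without the tower-property step the paper leaves implicit.
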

\begin{proof}
We denote by $Q$ the marginal distribution of $V$. Then, it follows that 
\begin{align}
P_0\left(V=v_0 \mid A(1)>A(0)\right) & =\int \mathds{1}(v=v_0) P\left(V=v\mid A(1)>A(0) \right) \, d Q(v) \nonumber\\
& = \frac{\int \mathds{1}(v=v_0)P_0\left(A(1)>A(0) \mid V=v\right)\, d Q(v) }{P_0\left(A(1)>A(0)\right)} \nonumber\\
& =\frac{\E_0[\mathds{1}(V=v_0) \{\E_0(A \mid X, Z=1)-\E_0(A \mid X, Z=0)\}]}{\E_0\{\E_0(A \mid X, Z=1)-\E_0(A \mid X, Z=0)\}}\nonumber
\end{align}
where the second equality follows by the Bayes rule and the last equality invokes the identification results provided by Lemma~\ref{lm:id1} under all required assumptions. Following the analogous steps, we also obtain
\begin{align}
P_0\left(V=v_0 \mid A(1)=A(0) = 1\right) & =\frac{\E_0[\mathds{1}(V=v_0) \E_0(A \mid X, Z=0)]}{\E_0\{\E_0(A \mid X, Z=0)\}}\nonumber
\end{align}
and 
\begin{align}
P_0\left(V=v_0 \mid A(1)=A(0) = 0\right) & =\frac{\E_0[\mathds{1}(V=v_0) \E_0(1-A \mid X, Z=1)]}{\E_0\{\E_0(1-A \mid X, Z=1)\}}\nonumber
\end{align}
in view of Lemma~\ref{lm:id1}.
\end{proof}

\subsection{Estimation and inference}
In this section, we discuss the estimation and inference of the parameter related to the profiling. Following the identification results given by Lemma~\ref{lm:identification}, we define the complier profile $\psi_{\text{co}}$
\begin{align*}
    \psi_{\text{co}}(P_0) := \frac{\E_0[\mathds{1}(V=v_0) \{\E_0(A \mid X, Z=1)-\E_0(A \mid X, Z=0)\}]}{\E_0\{\E_0(A \mid X, Z=1)-\E_0(A \mid X, Z=0)\}} = \frac{\E_0[\mathds{1}(V=v_0) \delta_0(X)]}{\E_0[\delta_0(X)]}
\end{align*}
As discussed in the main text, the denominator of the above display can be estimated by $\mathbb{P}_n\dot{\Delta}_{\widehat P}$. We thus  focus on the estimation of the numerator. We first derive the influence function of the functional $\E_0[\mathds{1}(V=v_0) \E_0(1-A \mid X, Z=1)]$. Following \cite{kennedy2022semiparametric}, we denote by $\mathbb{IF}$ an operator that maps real-valued functionals to their (efficient) influence functions in a nonparametric model. The influence function of the numerator can be obtained as follows:
\begin{align*}
&\mathbb{IF}\left(\E_P\big[\delta_P(X)\mathds{1}(V=v_0)\big]\right) \\
&\qquad= \sum_{x}\sum_{v}\mathbb{IF}\{\delta_P(x)p(x)\}\mathds{1}(v=v_0)+\delta_P(x)p(x)\mathbb{IF}\{\mathds{1}(v=v_0)\}\\
    &\qquad= \sum_{x}\sum_{v}\mathbb{IF}\{\delta_P(x)p(x)\}\mathds{1}(V=v_0)+\delta_P(x)p(x)\{\mathds{1}(v=v_0)-P(v=v_0)\}\\
    &\qquad= \sum_{x}\sum_{v}\left(\mathbb{IF}\{\delta_P(x)p(x)\}+\delta_P(x)p(x)\right)\mathds{1}(v=v_0)-\delta_P(x)p(x)P(v=v_0)\\
    &\qquad= \dot{\Delta}_P(a, z, x)\mathds{1}(v=v_0)-\E_P\big[\delta_P(X)\mathds{1}(V=v_0)\big]
\end{align*}
where the first line is the application of the product rule of the influence function \citep{kennedy2022semiparametric} and the last line is derived by \cite{kennedy2022semiparametric} as Example 2. This result relies on the assumption that $V$ is a discrete random variable. When $V$ is continuous, the influence function of $v \mapsto \mathds{1}(v=v_0)$ does not exists since it fails to satisfy the required differentiability assumption. We now apply Lemma~\ref{lemma:asymptotic_linear} in this context and obtain the following result:
   \begin{align*}
        &\frac{\mathbb{P}_n\dot{\Delta}_{\widehat P}\mathds{1}(v=v_0)}{\mathbb{P}_n\dot{\Delta}_{\widehat P}}- \psi_{\text{co}} \\
        &\quad = \mathbb{P}_n\left\{\frac{1}{\Delta_0}\left(\dot{\Delta}^*_0 -(\dot{\Delta}_0\mathds{1}(V=v_0)-P_0\delta_0\mathds{1}(V=v_0))\psi_{\text{co}}\right)\right\}+ o_P(n^{-1/2}).
    \end{align*}
assuming (i) $\|\widehat{\pi}-\pi_0\|\|\widehat\lambda -\lambda_0\| = o_P(n^{-1/2})$ and (ii) there exist $\varepsilon_1, \varepsilon_2, \varepsilon_3 > 0$ such that $\Delta_0 > \varepsilon_1$, $\varepsilon_2 < P_0(Z = 1 \mid X) < 1-\varepsilon_2$ and $\varepsilon_3 < \widehat\pi(X)< 1-\varepsilon_3$ almost surely. Furthermore, we denote the influence function of $\psi_{\text{co}}$ at $P$ as follows:
\begin{equation}
    \dot\varphi^*_{P, \text{co}} := (a, z, x) \mapsto \frac{1}{\Delta_P}\left(\dot{\Delta}^*_P -(\dot{\Delta}_P\mathds{1}(V=v_0)-P\delta_P\mathds{1}(V=v_0))\psi_{\text{co}}(P)\right)
\end{equation}
Then the $(1-\alpha)$-level confidence interval can then be obtained by
\begin{equation}
    \frac{\mathbb{P}_n\dot{\Delta}_{\widehat P}\mathds{1}(V=v_0)}{\mathbb{P}_n\dot{\Delta}_{\widehat P}} \pm n^{-1/2}q_{1-\alpha/2} \left(\mathbb{P}_n\dot\varphi^{*2}_{\widehat P, \text{co}}\right)^{1/2}
\end{equation}
where $\mathbb{P}_n \dot\varphi^{*2}_{\widehat P, \text{co}}$ is an estimated asymptotic variance of the estimator. 

Following the identification results given by Lemma~\ref{lm:identification}, we define the always-taker profile $\psi_{\text{at}}$
\begin{align*}
    \psi_{\text{at}}(P_0) := \frac{\E_0[\mathds{1}(V=v_0) \E_0(A \mid X, Z=0)]}{\E_0\{\E_0(A \mid X, Z=1)\}} = \frac{\E_0[\mathds{1}(V=v_0) \lambda_0(X, 0)]}{\E_0[\lambda_0(X, 0)]}
\end{align*}
Example 2 of \cite{kennedy2022semiparametric} provides the influence function of the denomenator above, which is given by
\begin{align*}
    \dot\Lambda^*_{P,0} &:= (a, z, x) \mapsto \frac{1-z}{1-\pi_P(x)}\left\{a-\lambda_P(x, 0)\right\} + \lambda_P(x, 0) - \E_P[ \lambda_P(X, 0)]\quad \text{and}\\
    \dot\Lambda_{P,0} &:= (a, z, x) \mapsto \frac{1-z}{1-\pi_P(x)}\left\{a-\lambda_P(x, 0)\right\} + \lambda_P(x, 0) 
\end{align*}
Then by an analogous derivation for the complier profile, the influence function of $\psi_{\text{at}}$ can be obtained as 
\begin{align*}
    \dot\varphi^*_{P, \text{at}}  (a, z, x) &:= \frac{1}{\E_P[ \lambda_P(X, 0)]}\big[\dot\Lambda^*_{P,0}(a, z, x)  \\
    &\qquad -\{\dot\Lambda_{P,0}(a, z, x) \mathds{1}(v=v_0)-\E_P[ \lambda_P(X, 0)\mathds{1}(v=v_0)]\}\psi_{\text{at}}(P)\big]
\end{align*}
where $\dot\lambda$ is the uncentered influence function. 
Then the $(1-\alpha)$-level confidence interval can then be obtained by
\begin{equation}
    \frac{\mathbb{P}_n\dot{\Lambda}_{\widehat P, 0}\mathds{1}(V=v_0)}{\mathbb{P}_n\dot{\Lambda}_{\widehat P, 0}} \pm n^{-1/2}q_{1-\alpha/2} \left(\mathbb{P}_n\dot\varphi^{*2}_{\widehat P, \text{at}}\right)^{1/2}
\end{equation}
Following the identification results given by Lemma~\ref{lm:identification}, we define the never-taker profile $\psi_{\text{at}}$
\begin{align*}
    \psi_{\text{nt}}(P_0) := \frac{\E_0[\mathds{1}(V=v_0) \E_0(1-A \mid X, Z=1)]}{\E_0\{\E_0(1-A \mid X, Z=1)\}} = \frac{\E_0[\mathds{1}(V=v_0) \{1-\lambda_0(X, 1)\}]}{\E_0[1-\lambda_0(X, 1)]}
\end{align*}
Example 2 of \cite{kennedy2022semiparametric} provides the influence function of the denomenator above and its uncentered analogue, which are given by
\begin{align*}
    \dot\Lambda^*_{P,1} &:= (a, z, x) \mapsto \frac{z}{\pi_P(x)}\left\{\lambda_P(x, 1)-a\right\} + 1-\lambda_P(x, 1) - \E_P[1- \lambda_P(X, 1)]\quad \text{and}\\
    \dot\Lambda_{P,1} &:= (a, z, x) \mapsto \frac{z}{\pi_P(x)}\left\{\lambda_P(x, 1)-a\right\} + 1-\lambda_P(x, 1).
\end{align*}
Then by an analogous derivation for the complier profile, the influence function of $\psi_{\text{nt}}$ can be obtained as 
\begin{align*}
    \dot\varphi^*_{P, \text{nt}}(a, z, x) & := \frac{1}{\E_P[ 1-\lambda_P(X, 1)]}\big[\dot\Lambda^*_{P,1}(a, z, x) \\
   &\qquad-\{\dot\Lambda_{P,1}(a, z, x)\mathds{1}(v=v_0)-\E_P[ (1-\lambda_P(X, 1))\mathds{1}(v=v_0)]\}\psi_{\text{nt}}(P)\big]
\end{align*}
where $\dot\lambda$ is the uncentered influence function. 
Then the $(1-\alpha)$-level confidence interval can then be obtained by
\begin{equation}
    \frac{\mathbb{P}_n\dot{\Lambda}_{\widehat P, 1}\mathds{1}(V=v_0)}{\mathbb{P}_n\dot{\Lambda}_{\widehat P, 1}} \pm n^{-1/2}q_{1-\alpha/2} \left(\mathbb{P}_n\dot\varphi^{*2}_{\widehat P, \text{nt}}\right)^{1/2}
\end{equation}
\clearpage

\section{Additional simulation studies}
This section provides two additional numerical studies.
\subsection{Scenario 2}
Under the second scenario, we introduce a relatively minor form of nonlinearity. Specifically, we now use the following specification for the key models:
\begin{itemize}
\item
  $\pi_0(X, 1) = \mathrm{expit}(0.4 X_1 - 0.8X_2)$
\item
  $\lambda_0^{\dagger}(X, Z, U) = \mathrm{expit}(-0.3 - 0.4X_1 - 0.14 X_2 + 1.1Z + 0.7 U)$
\item
  $r(\boldsymbol{X}) = - 4 X_1 + 6 \{X_2 - 0.3\} $
\item $s(\boldsymbol{X}) = 40 - 7X_1 - 8X_2$
\end{itemize}
Under this DGP, all the main models are based on linear functions. The only form of nonlinearity is due to the logistic links in $\pi_0$ and $\lambda_0^{\dagger}$. As such, we might expect TSLS and the parametric DRML IV estimator to display considerably less bias due to model misspecification compared to scenario 1. Normally, we might expect parametric methods to outperform nonparametric methods due to the bias-variance tradeoff. Here, we investigate whether the nonparametric IV estimator loses efficiency compared to parametric methods.

Figure~\ref{fig:scenario2} contains the results from the simulation under scenario 2. Under this scenario, we find that both DRML methods provide nearly identical results. There is a small amount of bias for the two smallest sample sizes, and bias shrinks to essentially zero for larger sample sizes. Notably, TSLS is biased, and that bias does not shrink with larger sample sizes. When we review the results for RMSE, we find that there is no difference between the two DRML methods, and only slight differences between the DRML methods and TSLS. These results demonstrate the strength of DRML methods. More typically, we might expect slow rates of convergence for a flexible estimator, but we observe that DRML estimation methods allow for flexible fits with parametric-like efficiency. 

\begin{figure}[htbp]
  \caption{Comparing bias and RMSE for TSLS and DRML IV under simulation scenario 2. }
  \label{fig:scenario2}
  \centering
  \includegraphics[width=0.9\linewidth]{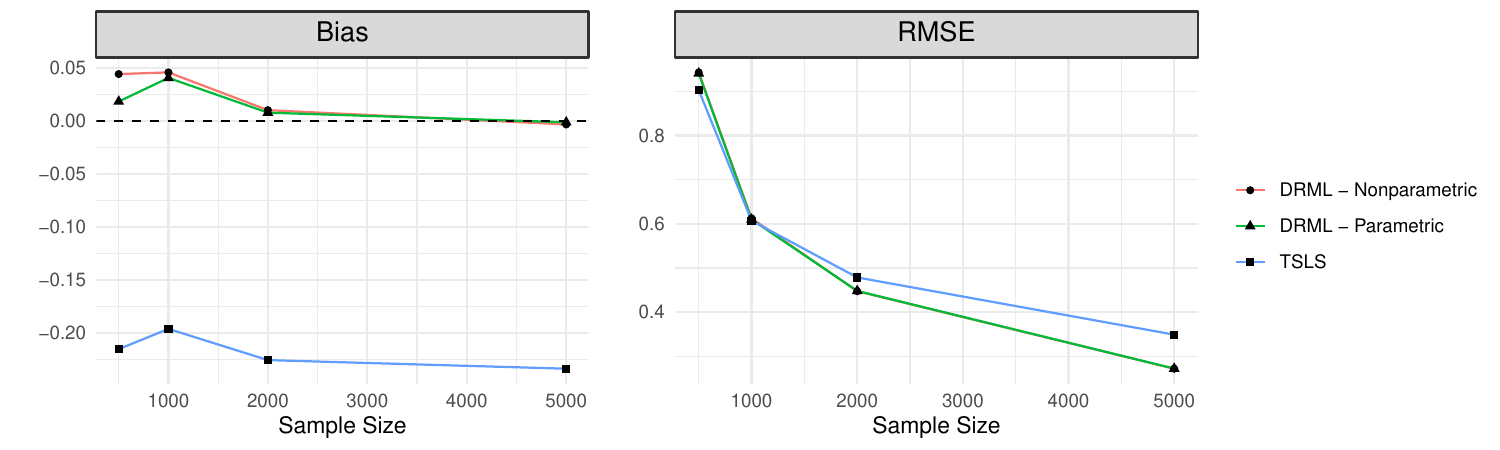}
\end{figure}

\subsection{Scenario 3}

Under the third scenario, we use functional forms for the models such that the model misspecification falls somewhere between scenarios 1 and 2. Now we only include nonlinearity for the model $\pi_0$ and $\lambda_0^{\dagger}$. Note that the form of nonlinearity in these two models is identical to that in scenario 1. As such, the DGP for scenario 3 is as follows: 

\begin{itemize}
\item
  $\pi_0(X, 1) = \mathrm{expit}(0.4 X_1 - 0.8X_2 + 0.4\mathds{1}(X_1 > 0))$
\item
  $\lambda_0^{\dagger}(X, Z, U) = \mathrm{expit}(-0.3 - 0.4X_1 - 0.14 X_2
  + 1.1 Z - 0.55 X_1 Z- 0.7 \mathds{1}(X_1 > 0) + 0.7 U)$
\item
  $r(\boldsymbol{X}) = - 4 X_1 + 6 \{X_2 - 0.3\}$
\item $s(\boldsymbol{X}) = 40 - 7X_1 - 8X_2$
\end{itemize}

Figure~\ref{fig:scenario3} contains the results for the simulation from scenario 3. The general pattern here mirrors that of scenario 2.  That is, both DRML IV estimators display less bias compared to TSLS. In terms of RMSE, all three methods have nearly identical behavior. In general, our simulation study makes a strong case for DRML IV estimation methods. That is, under various forms of model  misspecification TSLS is biased by differing amounts. DRML IV methods produce estimates with smaller bias than TSLS in all three scenarios. However, under more extreme forms of model misspecification, the nonparametric DRML estimator performed the best. Critically, despite being based on flexible nonparametric estimators, its performance is never worse in terms of RMSE. As such, this allows analysts to employ flexible estimation methods without sacrificing efficiency even when parametric assumptions hold.
 
\begin{figure}[htbp]
  \caption{Comparing bias and RMSE for TSLS and DRML IV under simulation scenario 3. }
  \label{fig:scenario3}
  \centering
  \includegraphics[width=0.9\linewidth]{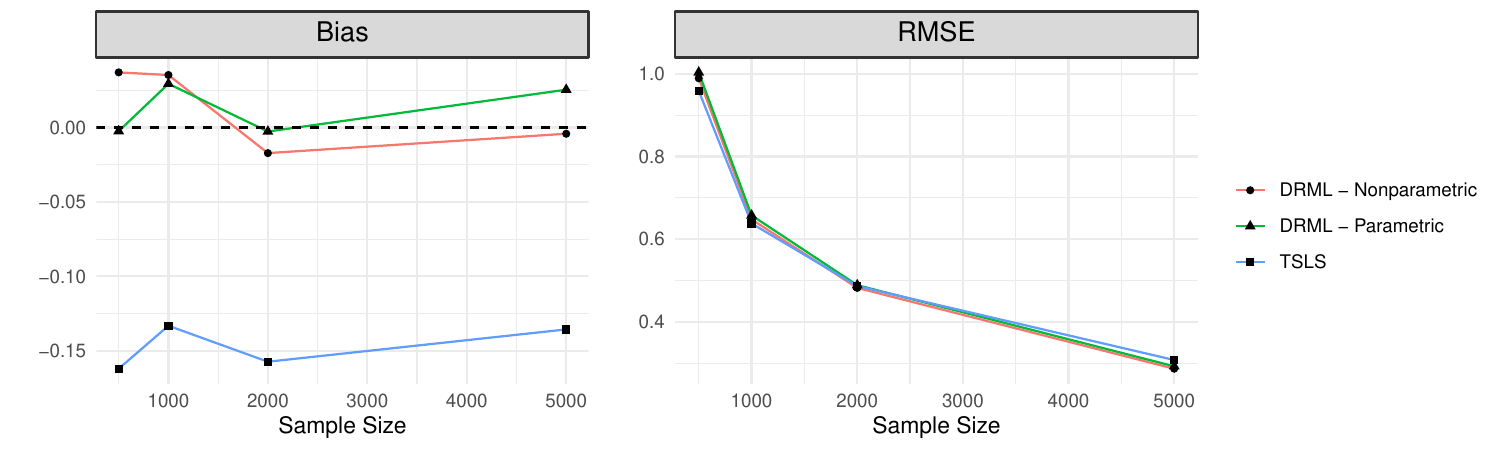}
\end{figure}
\clearpage

\end{appendices}

\end{document}